\newtheorem{thm}{Theorem}[section]
\newtheorem{lem}[thm]{Lemma}
\newtheorem{prop}[thm]{Proposition}
\newtheorem{cor}[thm]{Corollary}
\newtheorem{defn}{Definition}[section]
\DeclareMathOperator*{\argmax}{arg\,max}
\newcommand{\ua}[1]{\left\uparrow{#1}\right.}
\newcommand{\da}[1]{\left\downarrow{#1}\right.}
\begin{document}

\nocite{*}
\title{Dualization in Lattices Given by Ordered Sets of Irreducibles}

\author{Mikhail A. Babin, Sergei O. Kuznetsov}

\maketitle

\begin{abstract}
Dualization of a monotone Boolean  function on a finite lattice can be represented by transforming the
set of its minimal 1 values to the set of its maximal 0 values.
In this paper we consider finite lattices  given
by ordered sets of their meet and join irreducibles (i.e., as a concept
lattice of a formal context). We show that in this case dualization is equivalent to
the enumeration of so-called minimal hypotheses. In contrast to usual dualization setting, where
a lattice is given by the ordered set of its elements, dualization in this case
is shown to be impossible in output polynomial time unless P = NP. However, if the lattice is
distributive, dualization is shown to be possible in subexponential time.
\end{abstract}

\section{Introduction}

A monotone Boolean function on a finite lattice can be given by the set of
minimal 1 values or by the set of its maximal 0 values.
Dualization is the transformation of the set of minimal 1 values of a Boolean function to the set of its maximal 0 values or vice versa.
Since dualization is equivalent to many important problems in computer and data sciences~\cite{emg08,np12,elb02}, the  paper~\cite{fk96}
on quasi-polynomial dualization algorithm for Boolean lattices was an important breakthrough. It paved the way to generalizations to various classes of structures where dualization in output subexponential time is possible, among them dualization on lattices given by ordered sets of their elements or by products of bounded width lattices, like chains~\cite{emg08,elb02}.

A well-known fact is that every lattice is determined up to isomorphism by the ordered set of its meet (infimum) and join (supremum) irreducible elements~\cite{gw}. These elements cannot be represented as meets (joins) of other elements that are larger (smaller) then them.
On diagram of finite lattices these elements have one upper (lower) neighbor. In this paper we consider finite lattices  given
by ordered sets of their meet and join irreducibles, known as concept lattices~\cite{gw,dp,g11}.
We show that dualization for representation of this type is
impossible  in output polynomial time unless P = NP. However, in an important particular case where the lattice is distributive, we propose a subexponential algorithm.

Dualization in the considered case is not only of theoretical interest. Actually, this study was motivated by a practical problem of enumerating minimal hypotheses, which is a problem of learning specific type of classifiers from positive and negative examples. Hypotheses
or JSM-hypotheses were proposed by V.K.Finn~\cite{f83,f91} and formalized in terms of Formal Concept Analysis (FCA) in~\cite{kuz96,kzg,kuz04}.
The set of minimal hypotheses is classification equivalent to the set of all hypotheses, thus making a  condensed representation of the latter. The set of all hypotheses can be generated with polynomial delay~\cite{kuz04}, however, the problem of
generating minimal hypotheses with polynomial delay remained an open one for long time.
In this paper we show that dualization on lattices given by the ordered set of its irreducible elements is equivalent to enumeration of minimal hypotheses, thus complexity results concerning minimal hypotheses and dualization can be mutually translated.

In what follows we shall use the notation of Formal Concept Analysis~\cite{gw}, which provides a convenient language and necessary results for lattices given by ordered sets of irreducible elements.

The rest of the paper is organized as follows: In the second section we give
most important definitions. In the third section we prove the main
intractability result on impossibility of enumerating minimal hypotheses and dualization
in output polynomial time unless P = NP.
In the fourth section we conclude by
discussing the implication of the results for the problem of dualizing monotone
Boolean functions.
In the fifth section we relate minimum implication base problem to dualization over product of lattices that are given explicitly, and dualization over distributive lattice.
In the sixth section we describe subexponential dualization algorithm for the distributive lattice case.

\subsection{Related work}

To the best of our knowledge all dualization problems that have been studied in previous works consider dualization
over product of posets $\mathcal{P} = \mathcal{P}_1\times\ldots\times\mathcal{P}_k$,
where each poset $\mathcal{P}_i$ is some special type of a poset that is given explicitly. In \cite{elb02,el09} the author give quasi-polynomial time algorithms for the following cases: each $\mathcal{P}_i$ is a join semi-lattice of bounded width (any antichain has constant size), each $\mathcal{P}_i$ is a forest poset in which either the in-degree or the out-degree
of each element is constant (see also \cite{elb022}), each $\mathcal{P}_i$ is the lattice of intervals defined by a set of intervals on the real line $\mathbb{R}$. In \cite{elb02,el09} a more general dualization problem was stated where each $\mathcal{P}_i$ is a lattice (with no bounds on its width), the existence of quasi-polynomial time algorithms for this case is still an open question. In this paper we prove an upper bound complexity of the latter problem via another long-standing open complexity problem, the minimum implication base (see \cite{dg86}, equivalently SID problem from \cite{kh95, emg08}). The most common technique leading to quasi-polynomial time algorithm for duality problems are based on the idea of high frequency based decomposition, first introduced in \cite{fk96}. We use this method to get subexponential algorithm for the dualization over distributive lattice.

Although product of lattices $\mathcal{L}=\mathcal{L}_1\times\ldots\times\mathcal{L}_k$, where each $\mathcal{L}_i$ is given explicitly, can provide
exponentially smaller description of $\mathcal{L}$ not every lattice can have a nontrivial exponentially smaller representation of this kind.

\section{Preliminaries}

\begin{defn}\label{D0} \footnote{We use capital characters to denote elements of partially ordered sets since it agrees with FCA notation for concept lattices.}
A subset $\mathcal{A} \subseteq \mathcal{P}$ of a partially ordered set $(\mathcal{P}, <)$ is called an \emph{antichain} iff $A \nleq B$ for any $A,B \in \mathcal{A}$, i.e.,
all elements of an antichain are incomparable.
\end{defn}

The following property is required in dualization problems.
For two antichains $\mathcal{A}, \mathcal{B} \subseteq \mathcal{P}$
we say $(\mathcal{A},\mathcal{B})$ has property~$(*)$ if
\begin{center}  $A \not\leq B$  for any $A \in \mathcal{A},\ B \in \mathcal{B}\ (*)$. \end{center}

\begin{defn}
Antichains $\mathcal{A},\mathcal{B}\subseteq \mathcal{P}$ of partially ordered set $\mathcal{P}$ are called dual iff
$\mathcal{A}, \mathcal{B}$ satisfy property (*) and for any $P \in \mathcal{P}$ either $P\leq B$ for some $B \in \mathcal{B}$ or
$A \leq P$ for some $A \in \mathcal{A}$.
\end{defn}

The dualization problem over partially ordered set usually have the following statement:

\noindent{\bf Problem:}\ \rm Dualization over partially ordered set $\mathcal{P}$
\newline\noindent\emph{INPUT:}\ \rm Partially ordered set $\mathcal{P}$ (that can be given implicitly),
antichain $\mathcal{A} \subseteq \mathcal{P}$.
\newline\noindent\emph{OUTPUT:}\ \rm Antichain $\mathcal{B}\subseteq \mathcal{P}$ such that $\mathcal{A}$ and $\mathcal{B}$ are dual.

Note that the output $\mathcal{B}$ of the dualization problem can be exponential in the input size $(|\mathcal{A}| \times |[description\ of\ \mathcal{P}]|)$. Therefore, we are interested in the time complexity of dualization that
depends on both input and output sizes. We say that dualization problem can be solved in \emph{output polynomial} time if
there is an algorithm that can generate set $\mathcal{B}$ in time polynomial of
$|\mathcal{B}|\times|\mathcal{A}| \times |[description\ of\ \mathcal{P}]|$.
Usually we will consider decision version of the
dualization problem called \emph{duality} problem:

\noindent{\bf Problem:}\ \rm Duality over partially ordered set $\mathcal{P}$
\newline\noindent\emph{INPUT:}\ \rm Partially ordered set $\mathcal{P}$ (that can be given implicitly),
antichains $\mathcal{A},\mathcal{B} \subseteq \mathcal{P}$ satisfying (*).
\newline\noindent\emph{QUESTION:}\ \rm Are antichains $\mathcal{A}$ and $\mathcal{B}$ dual?

Equivalent definition of the dualization over poset can be given using monotone Boolean\footnote{Hereafter by Boolean functions we mean Boolean-valued functions.} functions
on a partially ordered set. Let $f: \mathcal{P} \mapsto \{0,1\}$ be a monotone Boolean function on a partially ordered set $\mathcal{P}$, i.e. $X \leq Y \Rightarrow f(X) \leq f(Y)$ and $\mathcal{A}$ is a set of minimal 1-values of $f$. Clearly, the set of maximal 0-values of $f$ is dual to $\mathcal{A}$.

In this paper we consider only the case where the partially ordered set over which we dualize is a lattice.
\label{sect:basics}
A partial ordered set $(\mathcal{L}, <)$ is called a \emph{lattice}~\cite{dp} if any
pair of its elements has an infimum (meet $\wedge$) and a supremum (join $\vee$). Equivalently,
a lattice is an algebra $(\mathcal{L},\wedge,\vee)$ with the following properties of
$\wedge$ and $\vee$:
\begin{itemize}

\item[L1] $X\vee X  = X$, \quad $X\wedge X  = X$ \ (idempotence)

\item[L2] $X\vee Y  = Y\vee X$, \quad $X\wedge Y  = Y\wedge X$ \ (commutativity)

\item[L3] $X\vee (Y\vee Z)  = (X\vee Y)\vee Z$, \quad $X\wedge (Y\wedge Z)  = (X\wedge Y)\wedge Z$ \ (associativity)

\item[L4] $X = X\wedge (X\vee Y) = X\vee (X\wedge Y)$ \ (absorption)

\end{itemize}
A lattice is called \emph{complete} if every subset of it has infimum and supremum.

A lattice is distributive if for any $X,Y,Z\in \mathcal{L}$
$$X\wedge (Y\vee Z) = (X\wedge Y) \vee (X\wedge Z).$$
The following elements of a lattice are very important in our work. An element $X\in \mathcal{L}$
is called \emph{infimum-irreducible} (or \emph{meet-irreducible})
if $X \neq \bigwedge_{Y > X} Y$, i.e., $X$ is not represented by the intersection of any elements above it. Dually,
an element $X\in \mathcal{L}$ is called \emph{supremum-irreducible} (or \emph{join-irreducible})
if $X \neq \bigvee_{Y < X} Y$, i.e., $X$ is not represented by the union of any elements below it.
Meet- (join-) irreducible elements have only one upper (lower) neighbor in the lattice diagram.

In what follows we use the standard definitions and facts of Formal Concept Analysis (FCA) from~\cite{gw}.
Let $G$ and $M$ be sets, called the set of \emph{objects} and \emph{attributes},
respectively. Let $I$ be a relation $I\subseteq G \times M$ between
objects and attributes: for $g\in G, m\in M, gIm$ holds iff the
object $g$ has the attribute $m$. The triple $\mathbb{K}=(G, M, I)$
is called a \emph{(formal) context} and is naturally represented by a cross-table, where
rows stay for objects, columns stay for attributes and crosses stay for pairs $(g,m)\in I$.
If $A\subseteq G, B\subseteq M$
are arbitrary subsets, then the following \emph{derivation operators}
$$A'=\{m\in M\mid gIm \ \forall g \in A \} $$
$$B'=\{g\in G\mid gIm \ \forall m \in B \} $$

define \emph{Galois connection} between ordered powersets $(2^G,\subseteq)$ and $(2^M,\subseteq)$, 
since $A\subseteq B' \iff B\subseteq A'$.
The pair $(A,B)$, where $A\subseteq G$, $B\subseteq M$, $A'=B$, and
$B'=A$ is called a \emph{(formal) concept (of the context
$\mathbb{K}$)} with \emph{extent} $A$ and \emph{intent} $B$ (in this
case we have also $A''=A$ and $B''=B$). Formal concepts are ordered by the
following relation
$$(A_1,B_1)\leq (A_2,B_2) \mbox{ iff}\ A_1\subseteq A_2 (B_2\subseteq B_1),$$
this partial order being a complete lattice on the set of all concepts.
This lattice is called a \emph{concept lattice} ${\cal L}(G,M,I)$ of the context $(G,M,I)$.

The set of join-irreducible elements of a concept lattice ${\cal L}(G,M,I)$
is contained in the set of
\emph{object concepts}, which have the form $(g'', g')$, $g\in G$. Dually, the set of meet-irreducible elements
of a concept lattice is contained in the set of
\emph{attribute concepts}, which have the form $(m', m'')$, $m\in M$.
An object $g$ is called \emph{reducible} if $g' = M$ or $\exists X \subseteq G\setminus \{g\}:\ g' = \bigcap \limits_{j\in X} j'$, i.e., the respective row of the context cross-table is either full or is an intersection of some other rows. If $g$ is not reducible, then $(g'',g')$ is a join-irreducible element of ${\cal L}(G,M,I)$. Dually, an attribute $m$ is called \emph{reducible} if $m' = G$ or $\exists Y \subseteq M\setminus \{m\}:\ m' = \bigcap \limits_{j\in Y} j'$, i.e. the respective column of the context cross-table is either full or is an intersection of some other columns. If $m$ is not reducible, then $(m',m'')$ is a meet-irreducible element of ${\cal L}(G,M,I)$.

The Basic Theorem of FCA~\cite{gw} implies that every finite lattice $(L,\vee, \wedge)$ can be represented as a concept lattice
${\cal L}(J(L),M(L),\leq)$, where $J(L)$ is the set of all join-irreducible elements of $L$, $M(L)$ is the set of meet-irreducible elements of $L$, and $\leq$
is the natural partial order of $(L, \vee, \wedge)$.

A set of attributes $B$ is
\emph{implied} by a set of attributes $A$, or implication
$A\rightarrow B$ holds, if all objects from $G$ that have all
attributes from $A$ also have all attributes from
$B$, i.e. $A'\subseteq B'$. Implications obey Armstrong rules
$${{}\over{X\to X}}\quad , \quad
        {{X\to Y}\over{X\cup Z\to Y}}\quad , \quad
        {{X\to Y, Y\cup Z\to W}\over{X\cup Z\to W}},$$
and a minimal subset of implications from which all other implications
can be deduced by means of Armstrong rules is called an \emph{implication base}.
In~\cite{dg86} a characterization of cardinality-minimum implication base (Duquenne-Guigues base) was given.
\par\medskip

\section{Enumeration of minimal hypotheses}
\label{sec:enumminhyp}
Now we present a learning model from~\cite{f83,f91}  in terms of FCA~\cite{kuz96,kzg,kuz04}.
This model complies with the common paradigm of learning from
positive and negative examples (see, e.g. \cite{kzg}, \cite{kuz04} ):
given a positive and negative examples of a ``target attribute",
construct a generalization of the positive examples that would not
cover any negative example.

Let $t$ be  \emph{target}  attribute, different
from attributes from the set $M$, which correspond to
\emph{structural} attributes of objects. For example, in
pharmacological applications the structural attributes can
correspond to particular subgraphs of molecular graphs of chemical
compounds.

Input data for learning can be represented by sets of positive,
negative, and undetermined examples. \emph{Positive examples} (or
$(+)$-examples) are objects that are known to have the target attribute $t$
and \emph{negative examples} (or $(-)$-examples) are objects that
are known not to have this attribute.

\begin{defn}\label{D1}
Consider positive context $\mathbb{K}_+=(G_+,M,\mathcal{I}_+)$ and
negative context $\mathbb{K}_-=(G_-,M,\mathcal{I}_-)$. The context
$\mathbb{K}_{\pm}=(G_+\cup G_-,M \cup \{w\},\mathcal{I}_+\cup
\mathcal{I_-}\cup G_+\times \{w\})$ is called a \emph{training
context}. The derivation operators in these contexts are denoted by
superscripts $(\cdot)^+$, $(\cdot)^-$, and $(\cdot)^\pm$, respectively.
\end{defn}

\begin{defn}\label{D2}
A subset $H \subseteq M$ is called a positive (or $(+)$-)-hypothesis
of training context $\mathbb{K}_{\pm}$ if $H$ is intent of
$\mathbb{K}_+$ and $H$ is not a subset of any intent of
$\mathbb{K}_-$. For $k\in N\cup \{0\}$ a subset $H \subseteq M$ is
called a $k$-weak positive (or $k(+)$-)-hypothesis
of training context $\mathbb{K}_{\pm}$ if $H$ is intent of
$\mathbb{K}_+$ and $|H^+\cap G_-|\leq k$.

\end{defn}
Obviously, a positive hypothesis is a 0-weak hypothesis. Weak hypotheses stay for noise-tolerant dependencies, which are important in data mining applications.
In the same way negative (or $(-)$-) hypotheses are defined.

Besides classified objects (positive and negative examples), one
usually has objects for which the value of the target attribute is unknown. These
examples are usually called undetermined examples, they can be given
by a context $\mathbb{K}_\tau:=(G_\tau,M,I_\tau),$ where  the
corresponding derivation operator is denoted by  $(\cdot)^\tau$.

Hypotheses can be used to classify the undetermined examples: If the
intent $$g^\tau:=\{m\in M\mid (g,m)\in I_\tau\}$$ of an object $g\in
G_\tau$ contains a positive, but no negative hypothesis, then
$g^\tau$ is {\em classified positively}. Negative classifications
are defined similarly. If $g^\tau$ contains hypotheses of both
kinds, or if $g^\tau$ contains no hypothesis at all, then the
classification is contradictory or undetermined, respectively. In
this case one can apply probabilistic techniques.

In \cite{kzg}, \cite{kuz04} it was argued that one can restrict to {\em minimal}
(w.r.t.\ inclusion $\subseteq$) hypotheses, positive as well as
negative, since an object intent $g^\tau$ obviously contains a positive
hypothesis if and only if it contains a minimal positive hypothesis.

\begin{defn}\label{D3}
For $k\in \mathbb{N}\cup \{0\}$ if the set of $k(+)$-hypotheses is not empty, then
$H$ is a minimal $k(+)$-hypothesis iff $H$ is a $k(+)$-hypothesis and $F$ is not a $k(+)$-hypothesis for any $F\subset H$.
In case the set of $k(+)$-hypotheses is empty, we put the set of minimal $k(+)$-hypotheses consisting of the only set $M$.
\end{defn}

The latter condition is needed technically for dualization: without it not every monotone Boolean function would be
dualizable.

\medskip

\noindent {\bf Example.}  Consider the following training context, where $m_0$ is the target attribute,
the set of attributes is $M=\{m_1,\ldots,m_6\}$, the set of negative examples is $G=\{g_1,g_2,g_3\}$,  the set of positive examples is $G_+=\{g_4,\ldots,g_9\}$ and the incidence relation $I$ is given by the following cross-table:

\medskip
\begin{center}
\begin{tabular}{|c|c|cccccc|}
\hline
$G\setminus M$&$m_0$& $m_1$ &$m_2$ & $m_3$ &$m_4$ &$m_5$ & $m_6$\\
\hline
$g_1$ &  &  & $\times$ & $\times$&  & $\times$ & $\times$\\
$g_2$ &  & $\times$ &  & $\times$& $\times$ &  & $\times$\\
$g_3$ &  & $\times$ & $\times$ & & $\times$ & $\times$ & \\

\hline
$g_4$ & $\times$ &  & $\times$ & $\times$ & $\times$ & $\times$ & $\times$\\
$g_5$ & $\times$ & $\times$ &  & $\times$ & $\times$ & $\times$ & $\times$\\
$g_6$ & $\times$ & $\times$ & $\times$ &  & $\times$ & $\times$ & $\times$\\
$g_7$ & $\times$ & $\times$ & $\times$ & $\times$ &  & $\times$ & $\times$\\
$g_8$ & $\times$ & $\times$ & $\times$ & $\times$ & $\times$ &  & $\times$\\
$g_9$ & $\times$ & $\times$ & $\times$ & $\times$ & $\times$ & $\times$ & \\
\hline
\end{tabular}
\end{center}

\medskip

\noindent Here, we have $2^3 = 8$ minimal hypotheses: $\{m_1, m_2, m_3\}$,
$\{m_1, m_2, m_6\}$, $\{m_1, m_5, m_3\}$, $\{m_1, m_5, m_6\}$,
$\{m_4, m_2, m_3\}$, $\{m_4, m_2, m_6\}$, $\{m_4, m_5, m_3\}$,
$\{m_4, m_5, m_6\}$.

\medskip

In what follows we will also need the following definition from FCA, which is important in
constructing ``hard cases" for FCA-related complexity problems.

\begin{defn}
Let $G=\{g_1,\ldots,g_n\}$ and $M=\{m_1,\ldots,m_n\}$ be sets
with same cardinality. Then the context
$\mathbb{K}=(G,M,\mathcal{I}_{\neq})$ is called \emph{contranominal
scale}, where $\mathcal{I}_{\neq}= G\times M \setminus
\{(g_1,m_1),\ldots,(g_n,m_n)\}$.
\end{defn}

The contranominal scale has the following property, which we will
use later: for any $H\subseteq M$ one has $H''=H$ and $H'=\{g_i\mid
m_i\notin H, 1\leq i\leq n\}$.

Here we discuss algorithmic complexity of enumerating all minimal
hypotheses. Note that there is an obvious algorithm for enumerating
all hypotheses (not necessary minimal) with polynomial
delay~\cite{kuz04}. This algorithm is an adaptation of an algorithm
for computing the set of all concepts, where the branching condition is changed to include the additional
condition $|H^+\cap G_-|\leq k$.

\noindent{\bf Problem:}\ \rm Minimal hypotheses enumeration (MHE)
\newline\noindent\emph{INPUT:}\ \rm Positive and negative contexts
$\mathbb{K}_+=(G_+,M,\mathcal{I}_+),\mathbb{K}_-=(G_-,M,\mathcal{I}_-)$
\newline\noindent\emph{OUTPUT:}\ \rm All minimal hypotheses of $\mathbb{K}_{\pm}$.

Unfortunately, this problem cannot be solved in output polynomial
time unless $P=NP$. In order to prove this result we study
complexity of the following decision problem.\par\medskip
\noindent{\bf Problem:}\ \rm Additional minimal hypothesis (AMH)
\newline\noindent\emph{INPUT:}\ \rm Positive and negative contexts
$\mathbb{K}_+=(G_+,M,\mathcal{I}_+),\mathbb{K}_-=(G_-,M,\mathcal{I}_-)$
and a set of minimal hypotheses $\mathcal{H} = \{H_1,\ldots,H_k\}$.
\newline\noindent\emph{QUESTION:}\ \rm Is there an \emph{additional} minimal hypothesis
$H$ of $\mathbb{K}_{\pm}$ i.e. minimal hypothesis $H$ such that
$H\notin \mathcal{H}$.

\begin{algorithm}                      % enter the algorithm environment
\caption{FindNewMinH($\mathbb{K}_+, \mathbb{K}_-, \mathcal{H}$)}
\label{mhe_alg}                           % and a label for \ref{} commands later in the document
\begin{algorithmic}[1]
    \REQUIRE DecideAMH($\mathbb{K}_+, \mathbb{K}_-, \mathcal{H}$) = \TRUE
    \FOR {$g \in G_+$}
        \STATE $G^g_+ \Leftarrow \{g^+\cap h^+\mid h \in G_+\}$
        \STATE $I^g_+ \Leftarrow \{(g,m)\mid m \in g, g \in G^g_+\}$
        \STATE $G^g_- \Leftarrow \{g^+\cap h^-\mid h \in G_-\}$
        \STATE $I^g_- \Leftarrow \{(g,m)\mid m \in g, g \in G^g_-\}$
        \STATE $\mathbb{K}^g_+ \Leftarrow \mathbb{K}(G^g_+,M\cap g^+, I^g_+)$
        \STATE $\mathbb{K}^g_- \Leftarrow \mathbb{K}(G^g_-,M\cap g^+, I^g_-)$
        \STATE $\mathcal{H}^g \Leftarrow \{h \mid h \subseteq g^+, h\in \mathcal{H}\}$
        \IF {DecideAMH($\mathbb{K}^g_+, \mathbb{K}^g_-, \mathcal{H}^g$)}
            \RETURN FindNewMinH($\mathbb{K}^g_+, \mathbb{K}^g_-, \mathcal{H}^g$)
        \ENDIF
    \ENDFOR
    \RETURN $M$
\end{algorithmic}
\end{algorithm}

\begin{lem}\label{AMHnMHE}
AMH is in $P$ iff MHE can be solved in output polynomial time.
\end{lem}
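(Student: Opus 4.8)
The plan is to prove the two implications separately; this is a \emph{decision-to-enumeration} equivalence of the usual kind, and the only nontrivial direction is turning the AMH oracle into an enumeration procedure. For the easy direction, suppose MHE is solvable in output-polynomial time by an algorithm $\mathcal{A}$ running in time $p(\,\text{input size}+\text{output size}\,)$ for a fixed polynomial $p$, and let an AMH instance $(\mathbb{K}_+,\mathbb{K}_-,\mathcal{H})$ be given with $|\mathcal{H}|=k$ and total input size $n$. Since every hypothesis is a subset of $M$, encoding any $k+1$ minimal hypotheses costs at most $c(k+1)|M|$ bits for a fixed constant $c$; put $T:=p\bigl(n+c(k+1)|M|\bigr)$ and run $\mathcal{A}$ for at most $T$ steps. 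If $\mathcal{A}$ halts, it has output the complete list of minimal hypotheses, so answer ``yes'' iff that list contains some $H\notin\mathcal{H}$. If $\mathcal{A}$ does not halt within $T$ steps, then the total number $N$ of minimal hypotheses must exceed $k$ (otherwise the output has size at most $c\,k\,|M|$ and $\mathcal{A}$ would have halted in $\le T$ steps), hence there is a minimal hypothesis outside $\mathcal{H}$ and we answer ``yes''. The whole procedure runs in time polynomial in $n$ and $k$, i.e.\ polynomial in the AMH instance size.

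For the converse I would use the self-reduction implemented by Algorithm~\ref{mhe_alg} inside the obvious incremental loop: start with $\mathcal{H}=\emptyset$; while DecideAMH$(\mathbb{K}_+,\mathbb{K}_-,\mathcal{H})$ returns true, call FindNewMinH to obtain a minimal hypothesis $H\notin\mathcal{H}$, add it to $\mathcal{H}$, and iterate; when DecideAMH returns false, output $\mathcal{H}$. Each iteration adds exactly one new minimal hypothesis and, provided FindNewMinH runs in polynomial time, costs $\mathrm{poly}(n)$; so the total running time is $O\bigl((N+1)\cdot\mathrm{poly}(n)\bigr)$, which is output-polynomial. Everything thus reduces to showing FindNewMinH is correct and polynomial. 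The structural fact I would isolate is: for $g\in G_+$, if $\mathbb{K}^g_\pm$ denotes the restriction of the training context to the attribute set $g^+$ (each object replaced by its trace on $g^+$), then a set $H\subseteq g^+$ is a minimal hypothesis of $\mathbb{K}_\pm$ iff it is a minimal hypothesis of $\mathbb{K}^g_\pm$ — because the intents of $\mathbb{K}_+$ contained in $g^+$ are exactly the intents of $\mathbb{K}^g_+$, the intents of $\mathbb{K}_-$ cut off at $g^+$ are exactly the intents of $\mathbb{K}^g_-$, and every subset of such an $H$ again lies inside $g^+$, so minimality is inherited in both directions. Together with the observation that every genuine hypothesis is an intent of $\mathbb{K}_+$ distinct from $M$ (the set $M$ is always an intent of $\mathbb{K}_-$ as well, so it can never be a genuine hypothesis) and hence is contained in some object intent $g^+$, this shows that if $\mathcal{H}$ misses a genuine minimal hypothesis then DecideAMH returns true on some child $\mathbb{K}^g_\pm$ with the restricted family $\mathcal{H}^g$, so the recursion descends. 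Termination follows once we only descend on children whose attribute set is strictly smaller (a child with $g^+$ equal to the whole current attribute set has, up to clarification, the same minimal hypotheses, so descending there gains nothing); the attribute set then strictly shrinks at every level.

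It remains to justify the last line of the algorithm: if no child yields a true answer, the missing minimal hypothesis cannot be a genuine one, so by Definition~\ref{D3} the set of genuine hypotheses of the current subcontext is empty and its unique minimal hypothesis is its whole attribute set — exactly the value returned. The main obstacle I anticipate is precisely pinning down this base case and reconciling it cleanly with the ``empty set of hypotheses $\Rightarrow$ the single minimal hypothesis is the top set'' convention of Definition~\ref{D3}, so that FindNewMinH never returns a set that is not an actual new minimal hypothesis of the \emph{original} context; this requires keeping careful track, at every recursion level, of whether the current subcontext has any genuine hypothesis at all and of the correct restricted family $\mathcal{H}^g$. The easy direction and the accounting for the incremental loop are routine by comparison.
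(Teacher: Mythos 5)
Your proposal follows the paper's route in both directions: the same truncated run of the output-polynomial enumerator for the easy implication, and the same self-reduction via Algorithm~\ref{mhe_alg} (restricting the training context to $g^+$ and recursing) for the hard one, wrapped in the obvious incremental loop. You are in fact more explicit than the paper on two points it leaves implicit: the correspondence ``minimal hypotheses of $\mathbb{K}_{\pm}$ contained in $g^+$ $=$ minimal hypotheses of $\mathbb{K}^g_{\pm}$'', and termination (the paper just asserts at most $|M|$ recursive calls; your observation that one should only descend into children with $g^+$ strictly smaller than the current attribute set is exactly what makes that count valid). The one step you flag but do not close --- the interaction of the Definition~\ref{D3} convention with subcontexts --- is the genuinely delicate point, and the paper's own proof does not address it either: if the convention is applied literally to a child $\mathbb{K}^g_{\pm}$ that has no genuine hypotheses, then DecideAMH on that child with $\mathcal{H}^g=\emptyset$ answers ``yes'' (its conventional minimal hypothesis $g^+$ is not in $\emptyset$), the recursion descends into a dead branch, and FindNewMinH returns a set that is not a hypothesis of the original context. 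The fix is to decide, for subcontexts, the convention-free question ``is there a \emph{genuine} minimal hypothesis outside $\mathcal{H}^g$'' (a polynomially equivalent oracle), after which your argument goes through; you should state this explicitly rather than leave it as an anticipated obstacle. Finally, your parenthetical that $M$ ``can never be a genuine hypothesis'' is wrong under the intended reading of Definition~\ref{D2} ($H\not\subseteq g^-$ for all $g\in G_-$): $M$ is a genuine hypothesis whenever no negative object has all attributes. This does not damage your argument, since returning $M$ when no child succeeds is correct whether $M$ is a genuine minimal hypothesis or the conventional one, but the justification should not rely on that false claim.
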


\begin{proof}
($\Leftarrow$) Assume there is an output polynomial algorithm
$\mathcal{A}$ that generates all minimal hypotheses in
time $p(|G_+|,|M|,|\mathcal{I}_+|,|G_-|,|\mathcal{I}_-|,N)$, where
$N$ is the number of minimal hypotheses. Use this algorithm to construct
$\mathcal{A}'$ that makes first
$p(|G_+|,|M|,|\mathcal{I}_+|,|G_-|,|\mathcal{I}_-|,k + 1)$ steps of
$\mathcal{A}$. Clearly,  if there is more than $k$ minimal
hypotheses, then $\mathcal{A}'$ generates $k+1$ minimal hypotheses,
hence we can solve AMH in polynomial time.

($\Rightarrow$) Now suppose there is a function DecideAMH~$(\mathbb{K}_+, \mathbb{K}_-, \mathcal{H})$
that solves AMH problem instance in time $O(t)$. We can use \emph{Algorithm~\ref{mhe_alg}} to find an additional minimal hypothesis if there is one.
Clearly {\verb line  2} to {\verb line  8} can be computed in time $O((|G_+| + |G_-|)|M|)$. Also note that the
total number of recursive calls can not be greater than $|M|$. Thus, time complexity
of the \emph{Algorithm~\ref{mhe_alg}} is $O((|G_+| + |G_-|)|M|^2t)$. Let us prove the correctness.
First note that since hypotheses are closed in $\mathbb{K}_+$ the additional minimal hypothesis
must be a subset of some $g^+, g \in G_+$, or it could be $M$. By definition the context $\mathbb{K}^g_+$
defines exactly all closed sets of $\mathbb{K}$ that are subsets of $g^+$.
It remains to note that at the last recursive call of \emph{Algorithm~\ref{mhe_alg}}
DecideAMH($\mathbb{K}^g_+, \mathbb{K}^g_-, \mathcal{H}^g$) does not hold for any $g\in G_+$.
Thus, the only possible additional minimal hypothesis that can be returned is $M$.
\end{proof}

 Now we prove $NP$-completeness of AMH through the reduction of the most known $NP$-complete problem
 -- satisfiability of CNF -- to AMH.
\par\medskip
\noindent{\bf Problem:}\ \rm CNF satisfiability (SAT)
\newline\noindent\emph{INPUT:}\ \rm A Boolean CNF formula $f(x_1,\ldots, x_n)=C_1\wedge \ldots
\wedge C_k$
\newline\noindent\emph{QUESTION:}\ \rm Is $f$ satisfiable?
\newline
\newline
Consider an arbitrary CNF instance $C_1,\ldots,C_k$ with variables
$x_1,\ldots,x_n$, where $C_i=(l_{i_1}\vee\ldots\vee l_{ir_i}),1\leq i
\leq k$ and $l_{ij} \in \{x_1,\ldots,
x_n\}\cup\{\neg{x_1},\ldots,\neg{x_n}\}$ ($1\leq i\leq k$, $1\leq
j\leq r_i$) are literals, i.e., variables or their negations.
From this instance we construct a positive context ${\mathbb
K}_+=(G_+, M, \mathcal{I}_+)$ and a negative context ${\mathbb
K}_-=(G_-, M, \mathcal{I}_-)$ . Define
$$M=\{C_1,\ldots, C_k\}\cup\{x_1,\neg{x_1},\ldots,x_n,\neg{x_n}\}
$$ $$G_+=\{g_{x_1},g_{\neg{x_1}},\ldots,g_{x_n},g_{\neg{x_n}}\}\cup\{g_{C_1},\ldots,g_{C_k}\}$$
$$G_-=\{g_{l_1},\ldots,g_{l_n}\}$$

The incidence relation of the positive context is defined by
$\mathcal{I}_+=\mathcal{I_C}\cup\mathcal{I}_{\neq}\cup\mathcal{I}_=$,
where $$\begin{aligned} \mathcal{I_C} &= \{(g_{x_i},C_j)\mid x_i
\notin C_j, 1\leq i\leq n, 1 \leq j\leq k\}\\&\cup
\{(g_{\neg{x_i}},C_j)\mid \neg{x_i} \notin C_j, 1\leq i\leq n, 1
\leq j\leq k\}\end{aligned}$$
$$\begin{aligned}\mathcal{I}_{\neq} &= \{g_{x_1},g_{\neg{x_1}},\ldots,g_{x_n},g_{\neg{x_n}}\}\times\{x_1,\neg{x_1},
\ldots,x_n,\neg{x_n}\}\\&-\{(g_{x_1},x_1),(g_{\neg{x_1}},\neg{x_1}),\ldots, (g_{x_n},x_n),(g_{\neg{x_n}},\neg{x_n})\}\end{aligned} $$
$$\mathcal{I}_{=} = \{(g_{C_1},C_1),\ldots, (g_{C_k},C_k)\} $$
 that is for $i$-th clause $C_i^+
\cap\{g_{x_1},g_{\neg{x_1}},\ldots,g_{x_n},g_{\neg{x_n}}\}$ is the
set of literals not included in $C_i$, $\mathcal{I_{\neq}}$ is
the relation of contranominal scale.

The incidence relation of the negative context is given by
$\mathcal{I}_-=\mathcal{I_L}$ where
$$\begin{aligned}\mathcal{I_L} &= G_-\times
\{x_1,\neg{x_1},\ldots,x_n,\neg{x_n}\}\\ &-
\{(g_{l_1},x_1),(g_{l_1},\neg{x_1}),\ldots,(g_{l_n},x_n),(g_{l_n},\neg{x_n})\}\end{aligned}
$$.
\newline
\begin{center}
\begin{tabular}{rc}
    \begin{tabular}{c}
    ${\mathbb K}_+$\\
    \\
    \\
    \\
    \\
    \\
    \\
    \\
    \\
    ${\mathbb K}_-$\\
    \end{tabular}
    &
    \begin{tabular}{|l|c|c|}
    \hline
    & $C_1$~$C_2$~$\cdots$~$C_k$&~$x_1$~$\neg{x_1}$~$\cdots$~$x_n$~$\neg{x_n}$ \\
    \hline\hline
    $g_{x_1}$& &\\
    $g_{\neg{x_1}}$& &\\
    $\vdots$& $\cal{I_C}$ & $\cal{I_{\neq}}$\\
    $g_{x_n}$& &\\
    $g_{\neg{x_n}}$& &\\
    \hline
    $g_{C_1}$& &\\
    $\vdots$& $\cal{I_{=}}$ &\\
    $g_{C_k}$& &\\
    \hline
    \hline
    $g_{l_1}$& &\\
    $\vdots$& & $\cal{I_L}$\\
    $g_{l_n}$& &\\
    \hline
    \end{tabular}
    \\
\end{tabular}
\end{center}
\par\bigskip

As the set of minimal hypotheses we take
$\mathcal{H}=\{\{C_1\},\{C_2\},\ldots,\{C_k\}\}$. It is easy to see
that ${\mathbb K}_{\pm}$ with $\mathcal{H}$ is a correct instance of
AMH.

If a hypothesis (not necessary minimal)  is not contained in
$\mathcal{H}$ we will call it \emph{additional}.

\begin{prop}\label{P1}
If $H$ is an additional minimal hypothesis of ${\mathbb K}_{\pm}$ then
\newline
$H\subseteq \{x_1,\neg{x_1},\ldots,x_n,\neg{x_n}\}$.
\end{prop}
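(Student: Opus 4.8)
The plan is to show that an additional minimal hypothesis $H$ cannot meet any of the "clause attributes" $C_j$ and cannot meet the columns $\mathcal{I}_=$ in a way that forces it out of the literal attributes; so it must be contained in $\{x_1,\neg x_1,\ldots,x_n,\neg x_n\}$. First I would recall what it means for $H$ to be a hypothesis: $H$ must be an intent of $\mathbb{K}_+$ (so $H = H^{++}$ in the positive context), and $H$ must not be contained in any intent of $\mathbb{K}_-$, i.e. $H$ contains the structural attributes of no negative object. Minimality means no proper subset of $H$ is already such a hypothesis.

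The key step is to analyze what the presence of some $C_j$ in $H$ implies. Suppose $C_j \in H$ for some $j$. Because $H$ is an intent of $\mathbb{K}_+$, we have $H = (H^+)^+$; in particular $H \supseteq (\{C_j\})^{++} \cap \ldots$, but more usefully, I would argue directly that $\{C_j\}$ is already itself an intent of $\mathbb{K}_+$ (its closure): the object $g_{C_j}$ has, by $\mathcal{I}_=$, exactly the attribute $C_j$ among the clause columns, and by construction no literal object $g_{x_i}, g_{\neg x_i}$ carries $C_j$ unless $x_i \notin C_j$ resp. $\neg x_i \notin C_j$; computing the closure $\{C_j\}^{++}$ one finds it equals $\{C_j\}$ (or at worst a set still containing $C_j$ and no negative intent), so $\{C_j\}$ is a hypothesis. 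But $\{C_j\} \in \mathcal{H}$, so if $C_j \in H$ and $H$ is minimal and additional, we would need $H = \{C_j\} \in \mathcal{H}$, a contradiction. Hence $H$ contains no $C_j$. Once no clause attribute lies in $H$, $H$ is a subset of $\{x_1,\neg x_1,\ldots,x_n,\neg x_n\}$, which is the claim.

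The main obstacle is the closure computation: I must verify carefully that $\{C_j\}^{++} = \{C_j\}$ in $\mathbb{K}_+$, i.e. that no literal attribute and no other clause attribute is forced in. This uses the contranominal-scale part $\mathcal{I}_{\neq}$: the literal objects $g_{x_1},\ldots,g_{\neg x_n}$ between them miss every single literal attribute (each $g_\ell$ misses exactly $\ell$), so $\{C_j\}^+$ — which includes all literal objects $g_\ell$ with $\ell \notin C_j$, a set of size $\geq 2n-2$ — has empty intersection of intents restricted to literal attributes, and among clause attributes it only retains $C_j$ since each $g_{C_i}$ with $i \neq j$ lacks $C_i$... wait, one must check $g_{C_i} \not\in \{C_j\}^+$ or that $C_i$ is dropped; in fact $g_{C_j}$ has only $C_j$, so $\{C_j\}^{++} \subseteq g_{C_j}^+ = \{C_j\}$, which finishes it cleanly. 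I would also note in passing that $\{C_j\}$ is not contained in any negative intent (negative objects carry only literal attributes, never $C_j$), confirming $\{C_j\}$ is genuinely a hypothesis and hence a member of $\mathcal{H}$ that blocks $H$.
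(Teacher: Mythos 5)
Your proof is correct and follows essentially the same route as the paper: if some $C_j\in H$, then since $\{C_j\}\in\mathcal{H}$ is a hypothesis, minimality forces $H=\{C_j\}$, contradicting that $H$ is additional. The only difference is that you explicitly verify $\{C_j\}^{++}=g_{C_j}^{+}=\{C_j\}$ and that no negative intent contains $C_j$, a check the paper delegates to its earlier remark that $\mathcal{H}=\{\{C_1\},\ldots,\{C_k\}\}$ is a correct AMH instance.
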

\begin{proof} Suppose $H \nsubseteq
\{x_1,\neg{x_1},\ldots,x_n,\neg{x_n}\}$, then since $H$ is not empty
there is some $C_i \in H$, $1\leq i\leq k$. But $H$ is a minimal
hypothesis and thus it does not contain any hypothesis. Hence
$H=C_i$ and this contradicts the fact that $H$ is an \emph{additional}
minimal hypothesis.
\end{proof}

For any $H \subseteq \{x_1,\neg{x_1},\ldots,x_n,\neg{x_n} \}$ that
satisfies $\{x_i,\neg{x_i}\}\cap H \neq \emptyset$ for any $1\leq i\leq n$ we
define the truth assignment $\varphi_H$ in a natural way:
$$\varphi_H(x_i)=\begin{cases}
true,&\text{if $x_i\in H$;}\\
false,&\text{if $x_i\notin H$;}
\end{cases}
$$
In the case $\{x_i,\neg{x_i}\} \cap H = \emptyset$ for some $1\leq i\leq
n$, $\varphi_H$ is not defined. We define $\varphi_H(x_i)=true$ even if
$\{x_i,\neg{x_i}\} \subseteq H$, although in this case it can be defined by eigther way.

Symmetrically, for a truth assignment $\varphi$ define the set $H_{\varphi}
= \{x_i\mid \varphi(x_i) = true\} \cup \{\neg{x_i}\mid \varphi(x_i) =
false\}$.

Below, for $H
\subseteq\{x_1,\neg{x_1},\ldots,x_n,\neg{x_n}\}$ we will denote the
complement of $H$ in $\{x_1,\neg{x_1},\ldots,x_n,\neg{x_n}\}$ by
$\overline{H}$.

\begin{prop}\label{P2}
If a subset $H \subseteq \{x_1,\neg{x_1},\ldots,x_n,\neg{x_n}\}$ is
not contained in the intent of any negative example (i.e $\forall g \in
G_-,H\nsubseteq g^-$), then $\varphi_{\overline{H}}$ is
defined. Conversely, for a truth assignment $\varphi$ the set
$\overline{H_{\varphi}}$ is not contained in the intent of any negative
concept.
\end{prop}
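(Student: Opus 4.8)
The plan is to first pin down what the intents of the negative context look like, and then reduce both halves of the statement to a one‑line combinatorial observation about sets of literals.

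\emph{The negative intents.} Since $\mathcal{I}_-=\mathcal{I}_L$ relates no object of $G_-$ to any clause attribute $C_j$ and relates $g_{l_i}$ to every literal attribute except $x_i$ and $\neg x_i$, one reads off $g_{l_i}^-=\{x_1,\neg x_1,\ldots,x_n,\neg x_n\}\setminus\{x_i,\neg x_i\}$ for each $i$. Hence, for any $H\subseteq\{x_1,\neg x_1,\ldots,x_n,\neg x_n\}$ and any $i$, $H\subseteq g_{l_i}^-$ iff $H\cap\{x_i,\neg x_i\}=\emptyset$, i.e.\ $H\nsubseteq g_{l_i}^-$ iff $H$ contains at least one literal over the variable $x_i$. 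As $G_-=\{g_{l_1},\ldots,g_{l_n}\}$, the hypothesis ``$H$ is not contained in the intent of any negative example'' is precisely the condition ``$H\cap\{x_i,\neg x_i\}\neq\emptyset$ for every $i$'', and likewise for the conclusion of the converse.

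\emph{First implication.} Assume $H\nsubseteq g^-$ for every $g\in G_-$, so $H$ meets $\{x_i,\neg x_i\}$ for all $i$. By definition, $\varphi_{\overline H}$ is defined exactly when $\overline H$ meets $\{x_i,\neg x_i\}$ for every $i$; since $\overline H$ is the complement in the full literal set, this is equivalent to $H$ \emph{not} containing both $x_i$ and $\neg x_i$ for any $i$. This is the point where I would use that $H$ is a minimal hypothesis (which is the only situation in which the proposition is applied afterwards, and where by Proposition~\ref{P1} $H$ is indeed a subset of the literals): if $\{x_i,\neg x_i\}\subseteq H$, then $H\setminus\{x_i\}$ still meets every $\{x_j,\neg x_j\}$ (so is still not covered by any $g^-$) and still contains no clause as a subset, hence is still an intent of $\mathbb{K}_+$; thus $H\setminus\{x_i\}$ is a hypothesis strictly contained in $H$, contradicting minimality. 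So $\overline H$ meets every $\{x_i,\neg x_i\}$ and $\varphi_{\overline H}$ is defined. I expect this swap argument to be the real content; note that without the minimality (or hypothesis‑hood) of $H$ the literal reading fails — e.g.\ $H$ equal to the whole literal set satisfies the hypothesis but has $\overline H=\emptyset$ — so one should read the statement as concerning the minimal hypotheses it is later instantiated on.

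\emph{Converse.} For a truth assignment $\varphi$, the set $\overline{H_\varphi}=\{x_i\mid\varphi(x_i)=\mathrm{false}\}\cup\{\neg x_i\mid\varphi(x_i)=\mathrm{true}\}$ contains exactly one literal over each variable, so $\overline{H_\varphi}\cap\{x_i,\neg x_i\}\neq\emptyset$ for every $i$; by the equivalence from the first paragraph, $\overline{H_\varphi}\nsubseteq g_{l_i}^-$ for all $i$, i.e.\ $\overline{H_\varphi}$ lies in the intent of no negative example. This direction is immediate, and the whole difficulty of the proposition is concentrated in the first implication — specifically in ruling out that a minimal hypothesis contained in the literals contains a complementary pair $\{x_i,\neg x_i\}$, which is what guarantees that passing to the complement $\overline H$ preserves the ``meets every variable'' property.
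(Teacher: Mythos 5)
Your proof is correct, and it does more than the paper: the paper dismisses the argument as ``straightforward,'' but the straightforward definitional unfolding only yields the equivalence you record in your first paragraph, namely that $H\nsubseteq g^{-}$ for all $g\in G_-$ means ``$H$ meets every pair $\{x_i,\neg x_i\}$,'' whereas the conclusion ``$\varphi_{\overline{H}}$ is defined'' means ``$\overline{H}$ meets every pair,'' i.e.\ ``$H$ contains no complementary pair.'' For an arbitrary subset $H\subseteq\{x_1,\neg x_1,\ldots,x_n,\neg x_n\}$ these two conditions are independent (already $H=\{x_1,\neg x_1,x_2\}$ with $n=2$, or $H$ equal to the full literal set, satisfies the premise but not the conclusion), so the forward implication of Proposition~\ref{P2} is false as literally stated. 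You correctly localize the missing ingredient: in the only place the proposition is invoked --- the $(\Rightarrow)$ direction of the NP-hardness theorem, where $H$ is an additional \emph{minimal} hypothesis and $H\subseteq\{x_1,\neg x_1,\ldots,x_n,\neg x_n\}$ by Proposition~\ref{P1} --- minimality excludes complementary pairs. Your swap argument $H\mapsto H\setminus\{x_i\}$ is sound: the smaller set still meets every pair, and it is still closed in $\mathbb{K}_+$ because a nonempty set of literals is closed there iff it contains no clause as a subset (the same computation of $H^{++}$ that the theorem's proof performs), so it would be a strictly smaller hypothesis. The converse direction is the immediate one the paper intends. In short: your proof is valid, and it repairs a genuine (if small) gap --- the proposition should either assume that $H$ is a minimal hypothesis or add the hypothesis that $H$ contains no pair $\{x_i,\neg x_i\}$.
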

\noindent The proof is straightforward.
\newline
\newline
The following theorem proves NP-hardness of AMH.

\begin{thm}
  AMH has a solution if and only if SAT has a solution.
\end{thm}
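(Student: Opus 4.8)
The plan is to show that the additional minimal hypotheses of $\mathbb{K}_{\pm}$ correspond exactly to complements (inside the set of literal attributes) of satisfying truth assignments of $f$. Both directions rest on one closure computation together with Propositions~\ref{P1} and~\ref{P2}. The computation: for a nonempty set of literals $H\subseteq\{x_1,\neg x_1,\dots,x_n,\neg x_n\}$ no clause object $g_{C_j}$ lies in $H^+$ (since $g_{C_j}^+=\{C_j\}$ contains no literal attribute), so $H^+=\{g_\ell\mid \ell\notin H\}$ and hence $H^{++}=H\cup\{C_j\mid C_j\cap\overline H=\emptyset\}$. Therefore a nonempty set of literals $H$ is an intent of $\mathbb{K}_+$ precisely when $\overline H$ meets every clause $C_j$; this is the only place the ``clause part'' $\mathcal{I_C},\mathcal{I_=}$ of the construction is used. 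I would also record that, since $g_{l_i}^-$ consists of all literal attributes except $x_i$ and $\neg x_i$, a set of literals is a $(+)$-hypothesis iff it is an intent of $\mathbb{K}_+$ and meets every pair $\{x_i,\neg x_i\}$.

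For ``SAT $\Rightarrow$ AMH'': given a satisfying assignment $\varphi$, set $H:=\overline{H_\varphi}$, the set of literals $\varphi$ makes false. Since $\overline H=H_\varphi$ meets every clause, $H$ is an intent of $\mathbb{K}_+$ by the computation above; $H$ contains exactly one literal of each pair $\{x_i,\neg x_i\}$, so by Proposition~\ref{P2} it is contained in no negative object intent and is thus a $(+)$-hypothesis; and $H$ is nonempty and contains no attribute $C_j$, so $H\notin\mathcal H$. Finally any $F\subsetneq H$ omits some pair $\{x_i,\neg x_i\}$ entirely, so $F\subseteq g_{l_i}^-$ and $F$ is not a hypothesis; hence $H$ is minimal, and the AMH instance answers ``yes''.

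For ``AMH $\Rightarrow$ SAT'': let $H$ be an additional minimal hypothesis. By Proposition~\ref{P1}, $H\subseteq\{x_1,\neg x_1,\dots,x_n,\neg x_n\}$ and $H\neq\emptyset$; being a hypothesis, $H\nsubseteq g_{l_i}^-$ for every $i$, so $H\cap\{x_i,\neg x_i\}\neq\emptyset$ for every $i$. The crucial step is to upgrade this to ``exactly one literal from each pair'': if $\{x_i,\neg x_i\}\subseteq H$, let $F:=H\setminus\{x_i\}$; because $H$ is closed, $\overline H$ meets every clause, hence so does the larger set $\overline F=\overline H\cup\{x_i\}$, so by the closure formula $F$ is again an intent of $\mathbb{K}_+$, and $F$ still meets every pair, so $F$ is a $(+)$-hypothesis with $F\subsetneq H$, contradicting minimality. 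Thus $H=H_\varphi$ for $\varphi:=\varphi_H$, and since $H$ is closed, $\overline H=\overline{H_\varphi}$ meets every clause, which is exactly the statement that $\varphi_{\overline H}$ satisfies $f$; hence SAT has a solution.

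I expect the ``exactly one literal per pair'' step in the second direction to be the main obstacle: it is the single point where the intent (closedness) condition and the non-coverage-by-negatives condition must be combined, and it relies on verifying that deleting a redundant literal from $H$ preserves closedness — precisely what the closure formula in the first paragraph is set up to provide. Everything else is routine bookkeeping with the contranominal scale $\mathcal{I}_{\neq}$ and the definitions of $\mathbb{K}_{\pm}$ and of $\varphi_{(\cdot)}$.
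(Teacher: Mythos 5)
Your proof is correct and follows the same reduction and the same hypothesis--assignment correspondence as the paper's; the differences are local refinements rather than a different route, but two of them are worth recording. In the direction SAT $\Rightarrow$ AMH the paper stops as soon as $\overline{H_\varphi}$ is shown to be a hypothesis containing no clause attribute, and then observes that any minimal hypothesis contained in it is automatically additional; you instead verify that $\overline{H_\varphi}$ is itself minimal (every proper subset misses some pair $\{x_i,\neg x_i\}$ entirely and hence lies in $g_{l_i}^-$). Both work; the paper's shortcut avoids the minimality check, yours exhibits the additional minimal hypothesis explicitly. In the direction AMH $\Rightarrow$ SAT, your exchange argument (drop $x_i$ from $H$ when $\{x_i,\neg x_i\}\subseteq H$ and use the closure formula $H^{++}=H\cup\{C_j\mid C_j\cap\overline{H}=\emptyset\}$ to see that the result is still a hypothesis, contradicting minimality) proves that a minimal additional hypothesis contains \emph{exactly} one literal per pair. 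The paper instead appeals to Proposition~\ref{P2} to get $\varphi_{\overline{H}}$ well defined; what the hypothesis condition $H\nsubseteq g_{l_i}^-$ actually delivers is only that $\overline{H}$ contains \emph{at most} one literal of each pair, and a close reading shows this weaker fact already suffices for the satisfiability argument (it is exactly what guarantees $\varphi_{\overline{H}}(\neg x_j)=\mathrm{true}$ whenever $\neg x_j\in\overline{H}$). So your extra step is not logically indispensable, but it is a clean, self-contained substitute for the somewhat loosely stated Proposition~\ref{P2}, and your explicit computation of $H^{+}$ and $H^{++}$ makes the role of the contranominal scale and of the clause part $\mathcal{I_C},\mathcal{I}_{=}$ more transparent than in the original.
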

\begin{proof}$(\Rightarrow)$ Let $H$ be an additional
minimal hypothesis of ${\mathbb K}_{\pm}$. First note that by
Proposition \ref{P1} and Proposition \ref{P2} the truth assignment
$\varphi_{\overline{H}}$ is correctly defined. Since $H$ is a nonempty
concept intent of $\mathbb{K}_+$, Proposition~\ref{P1} together with the fact
that $I_{\neq}$ is the relation of contranominal scale implies
$H^+=\{g_{x_i}\mid x_i \in \overline{H}\} \cup \{g_{\neg{x_i}}\mid
\neg{x_i} \in \overline{H}\}$. Now
$H^{++}\cap\{C_1,C_2,\ldots,C_k\}=\emptyset$, hence for any $C_i$
($1\leq i \leq k$) there is some $g_l \in H^+$ such that $g_l \notin
C_i^+$. According to the definition of $\mathcal{I_C}$ the latter
means that literal $l$ belongs to clause $C_i$. Thus
$f(\varphi_{\overline{H}})=true$.

($\Leftarrow$) Let $\varphi$ be a truth assignment and $f(\varphi)=true$.
Define $H=\overline{H_{\varphi}}$. Note that $H^+=\{g_{x_i}\mid x_i \in
H_{\varphi}\}\cup \{g_{\neg{x_i}}\mid \neg{x_i} \in H_{\varphi}\}$, because
$\mathcal{I}_{\neq}$ is the relation of contranominal scale and
$H\cap g_{C_j}^+=\emptyset,1\leq i \leq k$. Suppose that $C_i \in
H^{++}$ for some $1\leq i \leq k$. This is equivalent to $H^+
\subseteq C_i^+$. Hence, by definition of $\mathcal{I_C}$, there is no
literal $l\in H_{\varphi}$ such that $l \in C_i$. Therefore, the clause
$C_i$ does not hold and this contradicts the fact that $\varphi$ satisfies CNF
$f$. Thus $H^{++}=H$ and $H$ is a hypothesis. Since $H$ does not
contain any $\{C_i\}$, it must contain an additional minimal
hypothesis.
\end{proof}

\begin{cor}\label{cor:MHE}
  MHE cannot be solved in output polynomial time, unless $P=NP$.
\end{cor}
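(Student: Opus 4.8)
The plan is to derive the corollary by combining the equivalence of Lemma~\ref{AMHnMHE} with the \textsf{SAT}-reduction established in the preceding theorem. First I would observe that the reduction constructs, from an arbitrary \textsf{CNF} instance, a training context $\mathbb{K}_\pm$ together with the family $\mathcal{H}=\{\{C_1\},\ldots,\{C_k\}\}$ of minimal hypotheses in time polynomial in $n$ and $k$; the theorem shows this \textsf{AMH} instance is a yes-instance exactly when the formula is satisfiable. Hence \textsf{SAT} reduces in polynomial time to \textsf{AMH}, so \textsf{AMH} is \textsf{NP}-hard.

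Next I would argue that \textsf{AMH} lies in \textsf{NP}. A certificate is a set $H\subseteq M$ with $H\notin\mathcal{H}$; given $H$ one checks in polynomial time that $H$ is a hypothesis (compute $H^+$, verify $H^{++}=H$, and verify $H\not\subseteq g^-$ for every $g\in G_-$) and that it is minimal (for each $m\in H$ verify that $H\setminus\{m\}$ fails to be a hypothesis). All of these tests amount to $O(|M|)$ evaluations of the derivation operators of $\mathbb{K}_+$ and $\mathbb{K}_-$, so verification is polynomial. Note that if there is \emph{any} additional minimal hypothesis then in particular there is a minimal one (every hypothesis contains a minimal one, by Definition~\ref{D3}), so restricting the certificate to minimal $H$ loses nothing. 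Consequently \textsf{AMH} is \textsf{NP}-complete, and therefore \textsf{AMH}$\in P$ iff $P=NP$.

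Finally I would apply Lemma~\ref{AMHnMHE}: \textsf{MHE} can be solved in output polynomial time iff \textsf{AMH}$\in P$. Chaining the two equivalences gives that \textsf{MHE} is solvable in output polynomial time iff $P=NP$; taking the contrapositive, if $P\neq NP$ then \textsf{MHE} admits no output polynomial algorithm, which is the claim. The only step requiring any care is membership of \textsf{AMH} in \textsf{NP} — specifically the observation that minimality of a candidate $H$ is a polynomially checkable local condition rather than something that would force us to enumerate all hypotheses below $H$; once that is noted, the rest is a routine composition of reductions and does not involve any new construction.
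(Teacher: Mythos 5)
Your main chain is exactly the intended one and it works: the preceding theorem gives a polynomial reduction from SAT to AMH, so AMH is NP-hard; by the ($\Leftarrow$) direction of Lemma~\ref{AMHnMHE}, an output polynomial algorithm for MHE would put AMH in $P$, hence SAT in $P$, hence $P=NP$. Note that this is all the corollary needs --- only NP-\emph{hardness} of AMH is used, so your second paragraph (membership of AMH in NP) is superfluous for the stated claim.

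That second paragraph, however, is also wrong, and you single it out as ``the only step requiring any care,'' so it deserves correction. Your minimality test --- ``for each $m\in H$ verify that $H\setminus\{m\}$ fails to be a hypothesis'' --- does not certify minimality. By Definition~\ref{D2} a hypothesis must be an \emph{intent} of $\mathbb{K}_+$; a hypothesis $F\subsetneq H$ need not be of the form $H\setminus\{m\}$, and the sets $H\setminus\{m\}$ are typically not closed. Concretely, if the intents of $\mathbb{K}_+$ contained in $H=\{a,b,c\}$ are $\emptyset$, $\{a\}$ and $H$ itself, and $\{a\}$ is contained in no negative intent, then all three sets $H\setminus\{m\}$ fail to be intents and your verifier accepts $H$ as minimal, although $\{a\}$ is a hypothesis strictly below it. A correct polynomial test does exist, but it must go through the closure structure: every intent $F\subsetneq H$ satisfies $F\subseteq g^+\cap H$ for some $g\in G_+$ with $H\nsubseteq g^+$, and the property of not being contained in a negative intent is upward closed among intents; hence $H$ is minimal iff each such $g^+\cap H$ is contained in some $g'^-$, $g'\in G_-$. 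Either replace your local check by this one or, better, drop the NP-membership argument altogether, since the corollary follows from NP-hardness of AMH plus Lemma~\ref{AMHnMHE} alone.
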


\section{Dualizing monotone Boolean functions on lattices}\label{sec:monfunc}
%\subsection{Dualization on a lattice given by ordered set of irreducible elements}
Let $f$ be a monotone Boolean function on a lattice $\mathcal{L}$. Without loss of generality we can assume
that $\mathcal{L}$ is a concept lattice $\mathcal{L}=\mathfrak{B}(G,M,I)$ of the
corresponding formal context $\mathbb{K}=(G,M,I)$. Then $A \subseteq
B \Rightarrow f((A,A')) \leq f((B,B'))$. It is known that any
monotone Boolean function on a lattice is uniquely given by its
minimal 1-values, i.e. by the set $\mathcal{A}=\{(A,A')\mid (A,A') \in
\mathfrak{B}, f((A,A'))=1, f((B, B'))=0\ \forall B\subset A \}$.
Define positive context $\mathbb{K}_+ = \mathbb{K}$. Define negative context $\mathbb{K}_-=(G_-,M, I_-)$
via its set of objects intents $G_-=\{g_A\mid (A', A) \in \mathcal{A}\}$ and ${g_A}^{-} = A$. In other words negative examples are precisely intents of minimal 1-values of $f$. Clearly set of minimal hypotheses of $\mathbb{K}_{\pm}$ is exactly the set of maximal 0-values of $f$.

Symmetrically, for a given positive and negative contexts $\mathbb{K}_+$ and $\mathbb{K}_-$
define context $\mathbb{K}_{+\cup\_}=(G_+\cup G_-, M, I_+ \cup I_-)$. Let $f$ be a monotone
Boolean function on $\mathbb{K}_{+\cup\_}$ that is given by its minimal 1-values $\mathcal{A}=\{({g^-}', g^-)\mid g \in G_-\}$ ($(\cdot)'$ -- derivation operator of $\mathbb{K}_{+\cup\_}$). It is not hard to see that the set of maximal 0-values of $f$ is defined by the set of minimal hypotheses of $\mathbb{K}_{\pm}$.

From Corollary~\ref{cor:MHE} it follows that the following problem cannot be solved in output
polynomial time unless $P=NP$ \par\medskip
\noindent{\bf Problem:}\ \rm Maximial false values enumeration (MFE)
\newline\noindent\emph{INPUT:}\ \rm
A formal context $\mathbb{K}$ and a set of minimal 1 values of
monotone Boolean function $f$ on the concept lattice of $\mathbb{K}$.
\newline\noindent\emph{OUTPUT:}\ \rm Set of maximal 0 values of $f$.
\par\medskip

Lemma~\ref{AMHnMHE} also implies that the dualization problem on a lattice given by a formal context can be solved in output polynomial time iff
the corresponding duality (decision version of dualization) problem can be solved in polynomial time.

Note that in the case of Boolean lattice MFE problem is polynomially equivalent to Monotone Boolean
Dualization and minimal 0 values in this case can
be enumerated in quasi-polynomial time $O(N^{o(\log{N})})$,
where $N$ is $|input\ size|+|output\ size|$ (see~\cite{fk96}).

%\subsection{Yet another dualization setting: A lattice is given by implications (functional dependencies)}

In database theory a closure of a set of attributes $A$ is defined by means of iterated applications of functional dependencies with premises contained in $A$. Same type of closure, by means of implications instead of functional dependencies, is known in FCA. More precisely, applying
imp$(A) = A\cup \{B \mid D\to B, D\subseteq A\}$ iteratively to $A$ by putting at each next step $A:\colon =$~imp$(A)$ until saturation, one obtains implicational closure of $A$, which is equal to $A''$~\cite{gw}. So, the set of all implications of a context defines the closure operator $(\cdot)''$, closed subsets of attributes,  which together with the respective closed subsets of objects (extents) give the concept lattice. Hence, instead of defining a lattice by the ordered set of its irreducible elements, one can define it in terms of the set of all valid implications of the respective formal context, or, equivalently, by its implication base. This consideration poses another setting of the dualization problem, where the lattice -- instead of the set of positive examples $G_+$ -- is given by its implications or implication base, and one has to dualize the monotone function given by the set of examples $G_-$.
When the lattice is Boolean, its implication base is empty~\cite{gw}, so one has to dualize the set of examples $G_-$, which can be considered as a monotone DNF, where disjunction goes over objects -- elements of $G_-$ -- which themselves can be taken as conjunctions of the respective attributes. When the lattice is distributive, its minimum implication base has one-element premises~\cite{gw} (hence, the number of implications in the base is not larger than $|M|$), so it can easily be computed from the context in polynomial time, and vice versa. Therefore, the dualization on lattices given by implication bases for distributive lattices is polynomially equivalent to the dualization on lattices given by contexts (ordered sets of irreducible elements), which we study in the next section.
The study of dualization problems for lattices given by implication bases is motivated by simple linear-time reciprocal translations of implications to functional dependencies~\cite{ko08} and propositional Horn theories~\cite{emg08}.

In~\cite{kss00} it has been proven that the following problem is NP-hard:
\par\medskip
\noindent{\bf Problem:}\ \rm Incremental maximal model (IME)
\newline\noindent\emph{INPUT:}\ \rm Horn theory $\Phi$ and a set of its maximal models $S$.
\newline\noindent\emph{QUESTION:}\ \rm Is there another maximal model of $\Phi$ not contained in $S$?
\par\medskip
In terms of FCA a Horn theory corresponds to a set of implications $\mathcal{J}$
and maximal models correspond to inclusion maximal closed sets of $\mathcal{J}$, or object intents, that are not $M$.
In the dualization setting maximal closed sets
are dual to the singleton set $\{M\}$. Hence for the

\par\medskip
\noindent{\bf Problem:}\ \rm Minimal true values enumeration, on lattice given by implication base (MTEIB)
\newline\noindent\emph{INPUT:}\ \rm
A lattice $\mathcal{L}(\mathcal{J})$ given by an implication base $\mathcal{J}$
and a set of maximal 0 values of
monotone Boolean function $f$ on the lattice $\mathcal{L}(\mathcal{J})$.
\newline\noindent\emph{OUTPUT:}\ \rm Set of minimal 1 values of $f$.
\par\medskip

we have the following

\begin{cor}
A solution of MTEIB is impossible in output polynomial time unless $P=NP$.
\end{cor}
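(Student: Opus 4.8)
The plan is to carry over the reasoning that led from AMH to Corollary~\ref{cor:MHE}, now taking the $NP$-hardness of IME from~\cite{kss00} as the source of intractability and building on the dictionary recalled just above. First I would pin the correspondence down precisely: a Horn theory $\Phi$ is turned in linear time into an implication base $\mathcal{J}$ over the same attribute set $M$ with the same closed sets~\cite{emg08,ko08}, so that $\mathcal{L}(\mathcal{J})$ is the lattice of models of $\Phi$ and the maximal models of $\Phi$ are exactly the inclusion-maximal proper closed sets of $\mathcal{J}$. These form the antichain dual to the singleton $\{\hat 0\}$, where $\hat 0$ denotes the closed set $M$ viewed as the bottom element of $\mathcal{L}(\mathcal{J})$; equivalently, $\{\hat 0\}$ is the unique maximal $0$-value, and the maximal models are exactly the minimal $1$-values, of the monotone Boolean function on $\mathcal{L}(\mathcal{J})$ that takes value $0$ only at $\hat 0$. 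Since $\hat 0$ is read off from $\mathcal{J}$ for free (its intent being $M$), the pair $(\mathcal{J},\{\hat 0\})$ is a correct, polynomial-time-constructible instance of MTEIB whose required output is exactly the set of all maximal models of $\Phi$.

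Next I would run the ``$(\Leftarrow)$''-style argument of Lemma~\ref{AMHnMHE} against this family of instances. Suppose MTEIB is solved by an algorithm $\mathcal{A}$ whose total running time is bounded by $p(|\mathcal{J}|,|\mathcal{B}|,N)$, with $N$ the number of minimal $1$-values produced and $p$ nondecreasing in $N$. Given an IME instance $(\Phi,S)$ with $|S|=k$, form $(\mathcal{J},\{\hat 0\})$ and simulate $\mathcal{A}$ on it for $p(|\mathcal{J}|,1,k+1)$ steps, a number polynomial in the size of $(\Phi,S)$. If $\mathcal{A}$ halts within this budget it has listed all maximal models of $\Phi$, and ``$\Phi$ has a maximal model outside $S$'' holds iff it has produced more than $k$ of them; if $\mathcal{A}$ has not halted, then its total running time exceeds $p(|\mathcal{J}|,1,k+1)$, forcing $N>k$, so such a maximal model certainly exists. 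In either case IME is decided in polynomial time, hence IME $\in P$ and, since IME is $NP$-hard~\cite{kss00}, $P=NP$. This is the contrapositive of the corollary.

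I expect the only genuinely delicate point to be the one already met in Lemma~\ref{AMHnMHE}: ``output polynomial'' constrains only the \emph{total} running time, so $\mathcal{A}$ is free to emit its whole output in a single burst at the very end. This is exactly why the decision procedure must treat ``$\mathcal{A}$ has not yet halted after $p(\cdot,\cdot,k+1)$ steps'' as a ``yes'' answer rather than inspecting the partial output, and why it is convenient to assume $p$ nondecreasing in its last argument. The remaining verifications are routine and I would dispose of each in a line: that the function ``$0$ at $\hat 0$, $1$ elsewhere'' is monotone (its $1$-set is the up-set $\mathcal{L}(\mathcal{J})\setminus\{\hat 0\}$) with $\{\hat 0\}$ as its only maximal $0$-value, so that property~$(*)$ is respected and the MTEIB instance is valid; and that the degenerate case $|\mathcal{L}(\mathcal{J})|=1$ (no proper closed set, hence no maximal model) makes IME trivial. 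The one modelling choice one must get right is the direction: because MTEIB recovers minimal $1$-values from maximal $0$-values, the hard instances must place the many maximal models among the minimal $1$-values and keep a single maximal $0$-value at the bottom of $\mathcal{L}(\mathcal{J})$, not the other way round.
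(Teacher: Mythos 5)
Your proposal is correct and follows essentially the argument the paper leaves implicit: translate the Horn theory into an implication base, observe that the maximal models are exactly the minimal $1$-values dual to the single maximal $0$-value $\{M\}$ (you get the orientation right, which is the only real pitfall), and then convert an output-polynomial MTEIB algorithm into a polynomial decision procedure for IME by the same budget-simulation trick as in the ($\Leftarrow$) direction of Lemma~\ref{AMHnMHE}, contradicting the NP-hardness result of~\cite{kss00} unless $P=NP$. No gaps worth flagging.
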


\section{Dualization and minimum implication bases}

In this section we give complexity upper bounds of some important special cases of monotone Boolean dualization on lattices in terms of the complexity of minimum implication base problem (i.e. minimum Horn theory).

\noindent{\bf Problem:}\ \rm Minimum implication base recognition (MIBR)
\newline\noindent\emph{INPUT:}\ \rm Formal context $\mathbb{K}=(G,M,I)$, set of implications $\mathcal{J}$.
\newline\noindent\emph{QUESTION:}\ \rm Is $\mathcal{J}$ implication base of $\mathbb{K}$?

The complexity of MIBR problem is a long standing open problem. The only known complexity result is that MIBR is at least hard as monotone Boolean duality \cite{kh95,emg08}.

As we have shown monotone Boolean duality on a lattice given by a formal context is coNP-complete. It turns out that if we additionally have an implication base as input then the problem does not get harder than MIBR.

\noindent{\bf Problem:}\ \rm Duality over lattices given by formal context and implication base (DCI)
\newline\noindent\emph{INPUT:}\ \rm formal context $\mathbb{K}=(G,M,I)$, antichains
$\mathcal{A}, \mathcal{B} \subseteq \mathcal{L}(\mathbb{K})$ satisfying (*), implication base $\mathcal{J}$ of $\mathcal{L}(\mathbb{K})$.
\newline\noindent\emph{QUESTION:}\ \rm Are $\mathcal{A}$ and $\mathcal{B}$ dual on $\mathcal{L}(\mathbb{K})$?

Note that $\mathcal{J}$ could be any implication base of $\mathbb{K}$ that is not necessary minimum. Now we describe polynomial (Karp-)reduction of DCI to MIBR. Let us define a context
$\mathbb{K}_{\mathcal{B}} = (G_{\mathcal{B}}, M, I_{\mathcal{B}})$, where
$G_{\mathcal{B}}=\{g_B\mid g \in G, B \in \mathcal{B}\}$ ($|G_{\mathcal{B}}| = |G|\times|\mathcal{B}|$), and relation $I_{\mathcal{B}}$ is defined via object intents $g_B' = g' \cap B$ for any $g_B \in G_{\mathcal{B}}$.
Obviously, a set $X \subseteq M$ is closed in $\mathbb{K}_{\mathcal{B}}$ iff $X$ is closed in $\mathbb{K}$ and there is $B \in \mathcal{B}$ that $X\subseteq B$. Define implication base $\mathcal{J}_{\mathcal{A}} = \mathcal{J} \cup \{A\rightarrow M\mid A \in \mathcal{A}\}$. Clearly, a set $X$ is closed (satisfied) in $\mathcal{J}_{\mathcal{A}}$  iff $X = M$ or $X$ is closed in $\mathbb{K}$ and $A \nsubseteq X$ for any $A \in \mathcal{A}$. Thus $\mathcal{A}$ and $\mathcal{B}$ are dual on $\mathcal{L}(\mathbb{K})$
iff $\mathcal{J}_{\mathcal{A}}$ is implication base of $\mathbb{K}_{\mathcal{B}}$. We have proven:

\begin{lem}\label{MIBHARD}
MIBR is DCI-hard (under polynomial Karp-reduction)
\end{lem}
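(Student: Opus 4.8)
The plan is to exhibit an explicit polynomial-time Karp reduction from DCI to MIBR, i.e.\ to transform an instance $(\mathbb{K}, \mathcal{A}, \mathcal{B}, \mathcal{J})$ of DCI into an instance $(\mathbb{K}', \mathcal{J}')$ of MIBR such that $\mathcal{A}$ and $\mathcal{B}$ are dual on $\mathcal{L}(\mathbb{K})$ if and only if $\mathcal{J}'$ is an implication base of $\mathbb{K}'$. The two gadgets to build are: a context $\mathbb{K}_{\mathcal{B}}$ whose closed sets are exactly the closed sets of $\mathbb{K}$ contained in some member of $\mathcal{B}$, obtained by intersecting every object intent of $\mathbb{K}$ with every $B \in \mathcal{B}$; and an implication family $\mathcal{J}_{\mathcal{A}} = \mathcal{J} \cup \{A \to M \mid A \in \mathcal{A}\}$ whose models are exactly the closed sets of $\mathbb{K}$ that contain no $A \in \mathcal{A}$, together with the top set $M$.

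First I would verify the claimed characterization of the closed sets of $\mathbb{K}_{\mathcal{B}}$: a set $X \subseteq M$ is an intent of $\mathbb{K}_{\mathcal{B}}$ iff $X = \bigcap_{g_B \in S}(g' \cap B)$ for some $S \subseteq G_{\mathcal{B}}$; distributing the intersection, this equals $\big(\bigcap_{g \in S_G} g'\big) \cap \big(\bigcap_{B \in S_{\mathcal{B}}} B\big)$, but the $B$'s form an antichain so their intersection being above $X$ forces, after taking closure, that $X$ is a $\mathbb{K}$-intent with $X \subseteq B$ for the (unique relevant) $B$ — more carefully, $X$ is $\mathbb{K}_{\mathcal{B}}$-closed iff $X$ is $\mathbb{K}$-closed and $X \subseteq B$ for some $B \in \mathcal{B}$. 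Second, I would check the dual statement for $\mathcal{J}_{\mathcal{A}}$: since $\mathcal{J}$ is an implication base of $\mathbb{K}$, the models of $\mathcal{J}$ are exactly the $\mathbb{K}$-closed sets, and adding $A \to M$ for each $A \in \mathcal{A}$ kills precisely those models that contain some $A$ (collapsing them to $M$), so the models of $\mathcal{J}_{\mathcal{A}}$ are $\{M\} \cup \{X : X \text{ is } \mathbb{K}\text{-closed}, A \not\subseteq X \ \forall A \in \mathcal{A}\}$.

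With both characterizations in hand, the equivalence falls out: $\mathcal{J}_{\mathcal{A}}$ is an implication base of $\mathbb{K}_{\mathcal{B}}$ iff the two families of closed sets coincide, i.e.\ iff for every $\mathbb{K}$-closed $X \neq M$ we have $\big(\exists B \in \mathcal{B}: X \subseteq B\big) \iff \big(\forall A \in \mathcal{A}: A \not\subseteq X\big)$, which (checking that property~$(*)$ handles $X=M$ and the trivial inclusions) is exactly the condition that $\mathcal{A}$ and $\mathcal{B}$ are dual antichains on $\mathcal{L}(\mathbb{K})$. Finally I would note that $\mathbb{K}_{\mathcal{B}}$ has $|G| \cdot |\mathcal{B}|$ objects and $\mathcal{J}_{\mathcal{A}}$ has $|\mathcal{J}| + |\mathcal{A}|$ implications, so the reduction is polynomial. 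The main obstacle is the first characterization: one must be careful that intersecting with several distinct $B \in \mathcal{B}$ does not produce spurious closed sets not below any single $B$ — this is where the antichain hypothesis on $\mathcal{B}$, together with the fact that the closure operator of $\mathbb{K}$ is already built into each $g' \cap B$, has to be used, and it is worth stating explicitly that $X$ being $\mathbb{K}_{\mathcal{B}}$-closed forces $X \subseteq B$ for a \emph{single} $B$ rather than merely $X \subseteq \bigcup \mathcal{B}$.
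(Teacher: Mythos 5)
Your reduction is exactly the one the paper uses: the context $\mathbb{K}_{\mathcal{B}}$ with object intents $g'\cap B$ together with the implication set $\mathcal{J}_{\mathcal{A}}=\mathcal{J}\cup\{A\to M\mid A\in\mathcal{A}\}$, the same two characterizations of closed sets, and the same final equivalence with duality. Your extra care about spurious closed sets of $\mathbb{K}_{\mathcal{B}}$ resolves correctly (any intersection of intents $g'\cap B$ lies inside each participating $B$, which is itself $\mathbb{K}$-closed), so the proposal is correct and essentially identical to the paper's proof.
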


In \cite{elb02,el09} the problem of dualization over product of lattices was considered. For the case of semi-lattices of bounded width Elbassioni has shown that the duality problem can be solved in quasi-polynomial time. Nevertheless in case of product of general lattices the existence of quasi-polynomial algorithm is still an open problem. Here we prove that this problem is not harder than MIBR.

\noindent{\bf Problem:}\ \rm Duality over product of lattices (DPL)
\newline\noindent\emph{INPUT:}\ \rm Product of lattices $\mathcal{L}=\mathcal{L}_1\times \ldots \times \mathcal{L}_k$ given by $\mathcal{L}_1, \ldots, \mathcal{L}_k$, antichains $\mathcal{A}, \mathcal{B} \subseteq \mathcal{L}$ satisfying (*),
\newline\noindent\emph{QUESTION:}\ \rm  Are $\mathcal{A}$ and $\mathcal{B}$ dual over $\mathcal{L}$?

\begin{prop}\label{PROD}
MIBR is DPL-hard (under polynomial Karp-reduction)
\end{prop}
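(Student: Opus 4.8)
The plan is to give a polynomial Karp-reduction of DPL to DCI, and then compose it with Lemma~\ref{MIBHARD} (which asserts DCI~$\leq$~MIBR) to obtain DPL~$\leq$~MIBR, i.e. that MIBR is DPL-hard. Concretely, from a DPL instance consisting of $\mathcal{L}=\mathcal{L}_1\times\cdots\times\mathcal{L}_k$ (each $\mathcal{L}_i$ given explicitly) together with antichains $\mathcal{A},\mathcal{B}$ satisfying $(*)$, I will build in polynomial time a formal context $\mathbb{K}$, an implication base $\mathcal{J}$ of $\mathcal{L}(\mathbb{K})$, and antichains $\mathcal{A}',\mathcal{B}'\subseteq\mathcal{L}(\mathbb{K})$, so that $\mathcal{L}(\mathbb{K})\cong\mathcal{L}$ and the duality of $(\mathcal{A}',\mathcal{B}')$ on $\mathcal{L}(\mathbb{K})$ is equivalent to the duality of $(\mathcal{A},\mathcal{B})$ on $\mathcal{L}$.

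First I would represent each factor by the order context $\mathbb{K}_i=(\mathcal{L}_i,\mathcal{L}_i,\leq_i)$, in which object $x$ carries attribute $m$ iff $x\leq_i m$. A short computation shows that the intents of $\mathbb{K}_i$ are exactly the principal filters $\{m\in\mathcal{L}_i\mid m\geq_i x\}$, $x\in\mathcal{L}_i$, and that $x\mapsto(\downarrow x,\uparrow x)$ is an order isomorphism $\mathcal{L}_i\cong\mathcal{L}(\mathbb{K}_i)$; this context has size $O(|\mathcal{L}_i|^2)$. I would then take $\mathbb{K}=\mathbb{K}_1+\cdots+\mathbb{K}_k$ to be the direct sum of the $\mathbb{K}_i$ (disjoint union of objects and of attributes, with every object of one summand incident to every attribute of the others). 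By the standard fact that the concept lattice of a direct sum of contexts is the direct product of their concept lattices~\cite{gw}, we get $\mathcal{L}(\mathbb{K})\cong\mathcal{L}$, and $\mathbb{K}$ has polynomial size. The images $\mathcal{A}',\mathcal{B}'$ of $\mathcal{A},\mathcal{B}$ are obtained by mapping each tuple $(P_1,\ldots,P_k)$ to the intent $\bigcup_i\{m\in\mathcal{L}_i\mid m\geq_i P_i\}$; being an order isomorphism, this map sends antichains to antichains, preserves property $(*)$, and preserves duality in both directions.

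The only remaining ingredient is a polynomial-size implication base $\mathcal{J}$ of $\mathbb{K}$. Since the closure operator of a direct sum acts coordinate-wise, the union $\mathcal{J}=\bigcup_i\mathcal{J}_i$ of implication bases of the summands is an implication base of $\mathbb{K}$, so it suffices to exhibit a small base $\mathcal{J}_i$ of $\mathbb{K}_i$. For the order context this is immediate: the closure of a set $B\subseteq\mathcal{L}_i$ is the principal filter generated by $\bigwedge B$, and this closure is generated by the $O(|\mathcal{L}_i|^2)$ sound implications $\{x,y\}\to\{x\wedge_i y\}$ (closure under meets) and $\{x\}\to\{m\mid m\geq_i x\}$ (upward closure), together with $\emptyset\to\{\top_i\}$. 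Hence $|\mathcal{J}|=O(\sum_i|\mathcal{L}_i|^2)$ is polynomial, so $(\mathbb{K},\mathcal{A}',\mathcal{B}',\mathcal{J})$ is a legitimate DCI instance computable in polynomial time, and composing with Lemma~\ref{MIBHARD} finishes the reduction.

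I expect the main obstacle to be precisely the construction of a polynomial base $\mathcal{J}$. The seemingly natural choice---representing each $\mathcal{L}_i$ by its reduced context of irreducibles---can fail, because the cardinality-minimum base of a lattice described only through its irreducibles may already contain exponentially many implications in the number of its elements, which would make the reduction super-polynomial. The decisive point is that DCI admits \emph{any} implication base, and that we are free to choose \emph{any} context whose concept lattice is $\mathcal{L}_i$: the order context $(\mathcal{L}_i,\mathcal{L}_i,\leq_i)$ trades a larger attribute set for a transparently polynomial (if non-minimum) base, and this is exactly what keeps the whole reduction polynomial. The verifications that remain---soundness and completeness of each $\mathcal{J}_i$, the coordinate-wise action of the direct-sum closure, and the preservation of $(*)$ and of duality under the isomorphism---are routine.
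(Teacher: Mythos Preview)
Your reduction is correct and follows the same overall strategy as the paper: represent the product $\mathcal{L}$ as the concept lattice of a direct-sum context, equip it with a polynomial-size implication base obtained as the union of per-factor bases, and invoke Lemma~\ref{MIBHARD} to pass from DCI to MIBR. The only differences are in the implementation: the paper represents each $\mathcal{L}_i$ by its \emph{reduced} context of join- and meet-irreducibles and obtains the per-factor base by citing a result of Dechter and Pearl~\cite{dp92}, which guarantees a cardinality-minimum base of size at most $|M_i|^2|\mathcal{L}_i|$ computable in time $O(|M_i|^2|\mathcal{L}_i|^2)$. Since in DPL each $\mathcal{L}_i$ is given explicitly, $|\mathcal{L}_i|$ is part of the input, and this bound is polynomial---so your concern that the reduced-context route ``can fail'' is unfounded. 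Your choice of the order context $(\mathcal{L}_i,\mathcal{L}_i,\leq_i)$ together with the hand-built base $\{\{x,y\}\to\{x\wedge_i y\},\ \{x\}\to\uparrow x,\ \emptyset\to\{\top_i\}\}$ is a perfectly valid and more self-contained alternative that trades a slightly larger context for avoiding the external citation; both routes yield the same reduction.
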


\begin{proof}
First note that given a lattice $\mathcal{L}_i$ (e.g. as a whole relation matrix) we can find all join-irreducible and meet-irreducible elements of $\mathcal{L}_i$ in $poly(|\mathcal{L}_i|)$ time. Thus it is possible to get context $\mathbb{K}_{\mathcal{L}_i}=(G_i, M_i, I_i)$ that defines lattice $\mathcal{L}_i$ in polynomial time.
In order to construct a formal context $\mathbb{K}_{\mathcal{L}}=(G,M,I)$ of the product of lattices $\mathcal{L}$,
we define $G = G_1 \sqcup \ldots \sqcup G_k$, $M = M_1 \sqcup \ldots \sqcup M_k$, and relation $I$. Without loss of generality let $g \in G_i$ and $m \in M_j$ then  $g I m $ iff $i\neq j$ or $g I_i m$. It is straightforward to check that $\mathcal{L}(\mathbb{K})$ is isomorphic to $\mathcal{L}$.

In \cite{dp92} (Lemmas A.2 and A.3) it was proven that (in FCA terms)
{\it
Given a formal context $\mathbb{K}=(G,M,I)$ one can compute its cardinality-minimum implication base $\mathcal{J}$ in $O(|M|^2|\mathcal{L}(\mathbb{K})|^2)$ time. Moreover, such a $\mathcal{J}$ contains at most $|M|^2|\mathcal{L}(\mathbb{K})|$ implications. }
Thus for a given lattice $\mathcal{L}_i$ we can find implication base $\mathcal{J}_i$ of size $O(poly(|\mathcal{L}_i|)$ in time $O(poly(|\mathcal{L}_i|))$. Clearly, $\mathcal{J}_1\cup\ldots\cup\mathcal{J}_k$ is an implication base of
$\mathbb{K}_{\mathcal{L}}$.
    The proposition statement follows from Lemma~\ref{MIBHARD}.
\end{proof}

Another interesting special case of lattices for which we can establish similar complexity bound is the case of distributive lattices.
It is known that for a given context $\mathbb{K}$ of a distributive lattice, the minimum implication base of $\mathbb{K}$ has size
polynomial in $|\mathbb{K}|$ and can be found in polynomial time (\cite{gw}). Thus MIBR is in P for a distributive lattice.
The following Corollary is directly implied from this fact and Lemma~\ref{MIBHARD}:

\begin{cor}
Dualization on distributive lattice problem:
Given formal context $\mathbb{K}$ of a distributive lattice and antichains
$\mathcal{A},\mathcal{B} \subseteq \mathcal{L}(\mathbb{K})$ satisfying (*),
decide whether $\mathcal{A}$ and $\mathcal{B}$ are dual or not?
Is not harder than MIBR (under polynomial Karp-reduction).
\end{cor}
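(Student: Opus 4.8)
The plan is to reduce the distributive-lattice dualization problem of the Corollary to DCI and then apply Lemma~\ref{MIBHARD}. The only discrepancy between the two problems is that an instance of DCI comes with an implication base $\mathcal{J}$ of $\mathcal{L}(\mathbb{K})$ as part of the input, whereas here we are handed only the context $\mathbb{K}$ (known to define a distributive lattice) together with the antichains $\mathcal{A}$ and $\mathcal{B}$ satisfying $(*)$. So the first step is to produce such a $\mathcal{J}$ in polynomial time.

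For this I would invoke the cited fact that, for a context $\mathbb{K}$ of a distributive lattice, the minimum implication base of $\mathbb{K}$ has size polynomial in $|\mathbb{K}|$ and is computable in time polynomial in $|\mathbb{K}|$; concretely, one may take it with singleton premises, so that it consists of at most $|M|$ implications, each obtained by closing a single attribute in $\mathbb{K}$. Computing $\mathcal{J}$ in this way, the tuple $(\mathbb{K},\mathcal{A},\mathcal{B},\mathcal{J})$ is a legitimate DCI instance, and $\mathcal{A}$ and $\mathcal{B}$ are dual on $\mathcal{L}(\mathbb{K})$ precisely when the answer to this DCI instance is ``yes''.

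It then remains to compose this map with the polynomial Karp-reduction of Lemma~\ref{MIBHARD}, which sends a DCI instance to the MIBR instance $(\mathbb{K}_{\mathcal{B}},\mathcal{J}_{\mathcal{A}})$. The composition is a polynomial Karp-reduction from the distributive-lattice dualization problem to MIBR, which is the assertion of the Corollary.

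I do not expect a genuine obstacle here: the argument is a straightforward composition of reductions. The one point that deserves attention is that the polynomial bound on computing $\mathcal{J}$ must be in $|\mathbb{K}|$ rather than in $|\mathcal{L}(\mathbb{K})|$ — the latter can be exponentially larger — but this is exactly what the quoted result on distributive lattices provides, so I would cite it rather than reprove it.
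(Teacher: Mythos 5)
Your proposal is correct and is essentially the paper's own argument: the corollary is obtained by computing, in polynomial time from $\mathbb{K}$, the (singleton-premise, size at most $|M|$) implication base of the distributive lattice, thereby turning the instance into a DCI instance, and then composing with the reduction of Lemma~\ref{MIBHARD}. Your added remark that the polynomial bound must be in $|\mathbb{K}|$ rather than $|\mathcal{L}(\mathbb{K})|$ is exactly the right point to watch, and the cited fact for distributive lattices supplies it.
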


\section{Dualization over distributive lattices}

We assume that a distributive lattice is represented as a lattice $\mathcal{L}(\mathcal{P})$ of
downsets (order ideals) of a poset $\mathcal{P}$, and poset $\mathcal{P}$
is given by a matrix $n \times n$. It is well known that any distributive lattice has such a representation~\cite{dp,g11,gw}.
Note that one can use formal context representation of the distributive lattice as well, since
the size of the corresponding formal context $(P,P,\leq)$ is polynomial in $n$, and our dualization algorithm is subexponential.

We treat the elements of $\mathcal{L}=\mathcal{L}(\mathcal{P})$ as subsets of $\mathcal{P}$ (since they are downsets of
$\mathcal{P}$), so for two downsets $A, B \in \mathcal{L}(\mathcal{P})$ $A \leq B$ means that $A \subseteq B$.
For an element $p \in \mathcal{P}$, the smallest (by set inclusion) downset that contains $p$ is denoted by $\da{p}$, and the smallest upperset (order filter) that contains $p$ is denoted by $\ua{p}$. More generally, for any subset $X \subseteq \mathcal{P}$, by $\da X$ we denote the smallest downset that contains $X$, i.e. $\da X=\cup_{p\in X} \da p$.

Let $\mathcal{A}$ and $\mathcal{B}$ be antichains of a distributive lattice $\mathcal{L(P)}$.
Further on we will call a triple of the form ($(\mathcal{A}, \mathcal{B}), \mathcal{P}$) dualization problem input.
Note that in the degenerate cases where $\mathcal{A}=\emptyset$
or $\mathcal{B}=\emptyset$ the duality can easily be tested in polynomial time. If $\mathcal{A}$ is empty, then $\mathcal{B}$ is dual to $\mathcal{A}$ iff $\mathcal{B} =\{\mathcal{P}\}$. If $\mathcal{B}$ is empty, then $\mathcal{A}$ is dual to $\mathcal{B}$ iff
$\mathcal{A} =\{\emptyset\}$. Let us call the algorithm that tests duality in these two degenerate cases \emph{EasyTest((A,B),P)}.
\par\medskip
\noindent
We will also use the notion of frequency of an element $p\in\mathcal{P}$.
Let $\mathcal{C}$ be some set of subsets of $\mathcal{P}$ (i.e. $\mathcal{C} \subseteq 2^{\mathcal{P}}$), then
the frequency of $p$ in $\mathcal{C}$ is the fraction of elements of $\mathcal{C}$ that contain $p$:
\begin{defn}
$freq_{\mathcal{C}}(p) = |\{C \in \mathcal{C}\mid p \in C\}| / |\mathcal{C}|$.
\end{defn}

Let us denote $\overline{\mathcal{C}}=\{\mathcal{P}\setminus C \mid C \in \mathcal{C}\}$, thus by definition
$freq_{\overline{\mathcal{C}}}(p) = |\{C \in \mathcal{C}\mid p \notin C\}| / |\mathcal{C}|$.

For convenience we define the quantities $N = |\mathcal{A}| + |\mathcal{B}|$, and
$m = \max_{p\in \mathcal{P}}{(|\da{p}|+|\ua{p}|)}$ (note that $m \geq 2$).

\subsection{Algorithm}
\label{sect:algorithm}

Here we describe a subexponential algorithm for testing duality on a distributive lattice.
The structure of the algorithm is close to that in \cite{fk96}. The algorithm decomposes the initial problem instance
into smaller instances and solves them recursively. In order to keep the total number of recursive calls
subexponential at each decomposition step, the algorithm tries to select an element of $\mathcal{P}$
such that either it is \emph{frequent} or it has a \emph{large} fraction of successors of predecessors.

\begin{algorithm}                      % enter the algorithm environment
\caption{TestDuality($(\mathcal{A}, \mathcal{B}), \mathcal{P}$)}
\label{distr_latt_alg}                           % and a label for \ref{} commands later in the document
\begin{algorithmic}[1]                   % enter the algorithmic environment
    \REQUIRE $\mathcal{A},\mathcal{B} \subseteq \mathcal{L}(\mathcal{P})$
    \IF{$\mathcal{A} = \emptyset$ or $\mathcal{B} = \emptyset$}
      \RETURN EasyTest$((\mathcal{A}, \mathcal{B}), \mathcal{P})$
    \ENDIF
    \STATE $n \Leftarrow |\mathcal{P}|$
    \STATE $m \Leftarrow \max_{p\in \mathcal{P}}{(|\da{p}|+|\ua{p}|)}$
    \STATE $N = |\mathcal{A}| + |\mathcal{B}|$
    \IF{$m > n^{1/3}$}
      \STATE $p \Leftarrow \argmax_{p\in \mathcal{P}}{(|\da{p}|+|\ua{p}|)}$
    \ELSE
      \IF{$\max_{p\in \mathcal{P}}{freq_{\mathcal{A}}(p)} < \frac{1}{m \log_{4/3}{N}}$
          \AND $\max_{p\in \mathcal{P}}{freq_{\overline{\mathcal{B}}}(p)} < \frac{1}{m^2 \log_{4/3}{N}}$}
          \RETURN \FALSE
      \ENDIF
      \STATE $p \Leftarrow \argmax_{p\in \mathcal{P}}{(
            \max(freq_{\mathcal{A}}(p)},
                 freq_{\overline{\mathcal{B}}}(p)))$
    \ENDIF
    \RETURN TestDuality($({\mathcal{A}}_1^p, {\mathcal{B}}_1^p), \mathcal{P} \setminus \da{p}$) $\wedge$
            TestDuality($({\mathcal{A}}_2^p, {\mathcal{B}}_2^p), \mathcal{P} \setminus \ua{p}$)
\end{algorithmic}
\end{algorithm}

To describe decomposition performed by our algorithm we define the following four sets:
\begin{center}$\mathcal{A}^p_1 = \{A \setminus \da{p} \mid A \in \mathcal{A} \},\
\mathcal{B}^p_1 = \{B \setminus \da{p} \mid p \in B,\ B \in \mathcal{B} \},$ \end{center}
\begin{center}$\mathcal{A}^p_2 = \{A  \mid p \notin A,\ A \in \mathcal{A} \},\
\mathcal{B}^p_2 = \{B \setminus \ua{p} \mid B \in \mathcal{B} \}.$ \end{center}

Note that $\mathcal{B}^p_1 = \{B \setminus \da{p} \mid \da{p} \subseteq B,\ B \in \mathcal{B} \}$,
and $\mathcal{A}^p_2 = \{A  \mid \ua{p} \cap A = \emptyset,\ A \in \mathcal{A} \}$.

\noindent The following lemma proves the correctness of \emph{Algorithm~\ref{distr_latt_alg}}.

\begin{lem}
  For any $p\in \mathcal{P}$, $\mathcal{A}$ and $\mathcal{B}$ are dual iff the following two conditions hold:
  \newline
  $\mathcal{A}^p_1$ and $\mathcal{B}^p_1$ are dual on $\mathcal{L}(\mathcal{P} \setminus \da{p})$,\\
  $\mathcal{A}^p_2$ and $\mathcal{B}^p_2$ are dual on $\mathcal{L}(\mathcal{P} \setminus \ua{p})$
\end{lem}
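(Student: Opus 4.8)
The plan is to prove both directions of the equivalence by unwinding the definitions of duality on a distributive lattice of downsets, using the fact that a distributive lattice $\mathcal{L}(\mathcal{P})$ splits, relative to a fixed element $p$, into the downsets containing $\da p$ and the downsets disjoint from $\ua p$. First I would record the two ``structural'' facts that make the decomposition work: (i) for a downset $D\subseteq\mathcal{P}$, if $p\in D$ then $\da p\subseteq D$, and the map $D\mapsto D\setminus\da p$ is an order-isomorphism from the downsets of $\mathcal{P}$ containing $p$ onto the downsets of $\mathcal{P}\setminus\da p$; (ii) if $p\notin D$ then $D\cap\ua p=\emptyset$, and $D\mapsto D$ (equivalently $D\mapsto D\setminus\ua p$) is an order-isomorphism from the downsets of $\mathcal{P}$ not containing $p$ onto the downsets of $\mathcal{P}\setminus\ua p$. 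Note also that $\mathcal{A}_1^p$ is just the image of all of $\mathcal{A}$ under $D\mapsto D\setminus\da p$ (even for $A$ not containing $p$, since then $A\setminus\da p$ may fail to be a downset of $\mathcal{P}\setminus\da p$ — I must be slightly careful here and check that $A\setminus\da p$ is always a downset of $\mathcal{P}\setminus\da p$, which holds because removing a downset from a downset leaves a downset), whereas $\mathcal{B}_1^p$ keeps only those $B$ with $p\in B$; dually for $\mathcal{A}_2^p,\mathcal{B}_2^p$.

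Next I would translate the definition of duality into the two conditions ``property $(*)$: no $A\in\mathcal{A}$ lies below any $B\in\mathcal{B}$'' and ``covering: every $Q\in\mathcal{L}(\mathcal{P})$ is either $\leq$ some $B\in\mathcal{B}$ or $\geq$ some $A\in\mathcal{A}$,'' and split the covering condition according to whether $p\in Q$. If $p\in Q$, then ``$A\leq Q$'' for some $A\in\mathcal{A}$ is equivalent to ``$A\setminus\da p\subseteq Q\setminus\da p$'' for some $A\in\mathcal{A}$ (using $A\setminus\da p\subseteq Q\setminus\da p\iff A\subseteq Q$ when $p\in Q$, since then $\da p\subseteq Q$), i.e. to the $\mathcal{A}_1^p$-side of duality in $\mathcal{L}(\mathcal{P}\setminus\da p)$; and ``$Q\leq B$'' for some $B\in\mathcal{B}$ forces $p\in B$, hence is equivalent to $Q\setminus\da p\subseteq B\setminus\da p$ for some $B\in\mathcal{B}_1^p$. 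Since $Q\mapsto Q\setminus\da p$ ranges over all of $\mathcal{L}(\mathcal{P}\setminus\da p)$ as $Q$ ranges over downsets containing $p$, the restriction of the covering condition to $Q\ni p$ is exactly the covering condition for $(\mathcal{A}_1^p,\mathcal{B}_1^p)$ on $\mathcal{L}(\mathcal{P}\setminus\da p)$. Symmetrically, if $p\notin Q$, then ``$Q\leq B$'' is equivalent to $Q\subseteq B\setminus\ua p$ for some $B\in\mathcal{B}$ (using $Q\cap\ua p=\emptyset$), i.e. the $\mathcal{B}_2^p$-side, and ``$A\leq Q$'' forces $p\notin A$, i.e. the $\mathcal{A}_2^p$-side; again $Q$ ranges over all of $\mathcal{L}(\mathcal{P}\setminus\ua p)$. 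So the covering condition for $(\mathcal{A},\mathcal{B})$ is the conjunction of the covering conditions for $(\mathcal{A}_1^p,\mathcal{B}_1^p)$ and $(\mathcal{A}_2^p,\mathcal{B}_2^p)$.

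Then I would handle property $(*)$ in the same spirit: $(*)$ for $(\mathcal{A},\mathcal{B})$ says no $A\subseteq B$. Split a potential violation $A\subseteq B$ by whether $p\in A$. If $p\in A$ then $p\in B$, so $A\in$ (preimages feeding $\mathcal{A}_1^p$) and $B\in\mathcal{B}_1^p$, and $A\subseteq B\iff A\setminus\da p\subseteq B\setminus\da p$; so such a violation exists iff $(*)$ fails for $(\mathcal{A}_1^p,\mathcal{B}_1^p)$ — but here I must check that violations of $(*)$ for $(\mathcal{A}_1^p,\mathcal{B}_1^p)$ coming from some $A\in\mathcal{A}$ with $p\notin A$ cannot arise spuriously; in fact if $p\notin A$ and $A\setminus\da p\subseteq B\setminus\da p$ for some $B$ with $p\in B$, then $A=A\setminus\da p\subseteq B\setminus\da p\subseteq B$, so it is a genuine violation of $(*)$ for $(\mathcal{A},\mathcal{B})$ as well. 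If $p\notin A$, then $A\subseteq B$ with $B$ arbitrary; writing $B'=B\setminus\ua p$, we have $A\cap\ua p=\emptyset$ so $A\subseteq B\iff A\subseteq B\setminus\ua p=B'$, giving a violation for $(\mathcal{A}_2^p,\mathcal{B}_2^p)$, and conversely. The two cases $p\in A$, $p\notin A$ are exhaustive, so $(*)$ for $(\mathcal{A},\mathcal{B})$ holds iff $(*)$ holds for both $(\mathcal{A}_1^p,\mathcal{B}_1^p)$ and $(\mathcal{A}_2^p,\mathcal{B}_2^p)$. Combining the $(*)$ equivalence with the covering equivalence yields the lemma.

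The main obstacle I anticipate is bookkeeping around the asymmetry in the definitions of the four sets: $\mathcal{A}_1^p$ and $\mathcal{B}_2^p$ are formed by applying the truncation map to \emph{all} elements, while $\mathcal{B}_1^p$ and $\mathcal{A}_2^p$ additionally filter by the condition $p\in B$ (resp.\ $p\notin A$). I need to verify that this asymmetry is exactly right — that is, that dropping a $B$ with $p\notin B$ from the first subproblem and an $A$ with $p\in A$ from the second loses no necessary constraint (because such a $B$ cannot be $\geq$ any downset $Q$ with $p\in Q$, and such an $A$ cannot be $\leq$ any $Q$ with $p\notin Q$), and that keeping \emph{all} of $\mathcal{A}$ in $\mathcal{A}_1^p$ (resp.\ all of $\mathcal{B}$ in $\mathcal{B}_2^p$) does not introduce a spurious constraint (which is the computation in the previous paragraph showing any induced $(*)$- or covering-violation pulls back). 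A secondary technical point is that $\mathcal{A}_1^p$, $\mathcal{B}_2^p$ etc.\ need not be antichains even when $\mathcal{A},\mathcal{B}$ are; this is harmless because the notion of duality, and property $(*)$, make sense for arbitrary subsets and the recursion in Algorithm~\ref{distr_latt_alg} can simply discard non-maximal/non-minimal elements, but I would remark on it so the correctness claim is unambiguous. Everything else is a routine unwinding of ``downset of $\mathcal{P}$ containing $p$'' $\leftrightarrow$ ``downset of $\mathcal{P}\setminus\da p$'' and ``downset of $\mathcal{P}$ avoiding $p$'' $\leftrightarrow$ ``downset of $\mathcal{P}\setminus\ua p$''.
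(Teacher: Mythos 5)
Your proof is correct and follows essentially the same decomposition as the paper's: case-split on whether a downset contains $p$, use $X \mapsto X \setminus \da{p}$ to identify downsets containing $p$ with $\mathcal{L}(\mathcal{P}\setminus\da{p})$ and the identity map to identify $p$-free downsets with $\mathcal{L}(\mathcal{P}\setminus\ua{p})$, and observe that $Q \subseteq B$ forces $p \in B$ when $p \in Q$ while $A \subseteq Q$ forces $p \notin A$ when $p \notin Q$; you are merely more explicit than the paper about property~$(*)$, which it dismisses as ``easy to check.'' One small slip: when $p\notin A$ you assert $A = A\setminus\da{p}$, which can fail since a downset $A$ may contain elements strictly below $p$ without containing $p$; the intended conclusion $A\subseteq B$ still follows, though, because $p\in B$ gives $\da{p}\subseteq B$ and hence $A = (A\setminus\da{p})\cup(A\cap\da{p}) \subseteq (B\setminus\da{p})\cup\da{p} = B$.
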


\begin{proof}

($\Leftarrow$) Let us fix arbitrary $X \in \mathcal{L}$. Consider two possible cases: $p \in X$ and $p \notin X$.
If $p \in X$ then since $\mathcal{A}^p_1$ and $\mathcal{B}^p_1$ are dual,
either $A_1 \subseteq X\setminus \da{p}$ for some $A_1 \in \mathcal{A}^p_1$, or
$X \setminus \da{p} \subseteq B_1$ for some $B_1 \in \mathcal{B}^p_1$.
Clearly, $X \setminus \da{p} \subseteq B_1$ implies $X \subseteq B_1\cup \da{p} \in \mathcal{B}$.
On the other hand $A_1 \in \mathcal{A}^p_1$ implies that there is
$A\in \mathcal{A}$ such that $A_1 = A\setminus \da{p}$, and hence $A \subseteq X$ (since $\da{p} \subseteq X$).

If $p \notin X$ then since $\mathcal{A}^p_2$ and $\mathcal{B}^p_2$ are dual
either $A_2 \subseteq X\setminus \ua{p}$ for some $A_2 \in \mathcal{A}^p_2$, or
$X \setminus \ua{p} \subseteq B_2$ for some $B_2 \in \mathcal{B}^p_2$.
By definition $B_2 \in \mathcal{B}^p_2$ implies that there is
$B\in \mathcal{B}$ such that $B_2 = B\setminus \ua{p}$.
Note that $A_2 \in \mathcal{A}$, and $X = X \setminus \ua{p} \subseteq B_2 \subseteq B$.

($\Rightarrow$) Let us prove that $\mathcal{A}^p_1$ and $\mathcal{B}^p_1$ are dual. Consider
arbitrary $X \in \mathcal{L}(\mathcal{P} \setminus \da{p})$. Because $\mathcal{A}$ and  $\mathcal{B}$
are dual on $\mathcal{L}(\mathcal{P})$ either $A \subseteq X\cup \da{p}$ for some $A \in \mathcal{A}$,
or $X\cup \da{p} \subseteq B$ for some $B \in \mathcal{B}$. If $A \subseteq X\cup \da{p}$ then
$A \setminus \da{p} \subseteq X$ (since $\da{p} \cap X = \emptyset$). If $X\cup \da{p} \subseteq B$ then
$X \subseteq B \setminus \da{p}$, and by definition $B \setminus \da{p} \in \mathcal{B}^p_1$. It is easy to check that $(\mathcal{A}^p_1, \mathcal{B}^p_1)$ has property $(*)$

Now we prove that $\mathcal{A}^p_2$ and $\mathcal{B}^p_2$ are dual. Consider
arbitrary $X \in \mathcal{L}(\mathcal{P} \setminus \ua{p})$. Note that $X \in \mathcal{L}(\mathcal{P})$.
Because $\mathcal{A}$ and  $\mathcal{B}$ are dual on $\mathcal{L}(\mathcal{P})$ either $A \subseteq X$
for some $A \in \mathcal{A}$, or $X \subseteq B$ for some $B \in \mathcal{B}$. If $A \subseteq X$ then
$p \notin A$, and $A \in \mathcal{A}^p_2$. If $X \subseteq B$ then
$X \subseteq B \setminus \ua{p}$ (since $\ua{p} \cap X = \emptyset$).
It is easy to check that $(\mathcal{A}^p_2, \mathcal{B}^p_2)$  has property $(*)$.

\end{proof}

The following lemma helps one to establish a lower bound on the frequency of the most frequent element of $\mathcal{P}$.

\begin{lem}\label{inequality_lemma}
  If $\mathcal{A}$ and $\mathcal{B}$ are dual then
  $$ \sum_{A\in \mathcal{A}}{{(3/4)}^{|A|/m^2}} + \sum_{B\in \mathcal{B}}{e^{-(n-|B|)/m}} \geq 1 $$
\end{lem}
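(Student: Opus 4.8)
\emph{Sketch of an approach.}
The plan is a probabilistic argument in the spirit of the Fredman--Khachiyan inequality for Boolean lattices~\cite{fk96}: I would construct a probability distribution on the downsets $X\in\mathcal{L}(\mathcal{P})$ for which
$$\Pr[A\subseteq X]\le (3/4)^{|A|/m^{2}}\ \ (A\in\mathcal{A}),\qquad \Pr[X\subseteq B]\le e^{-(n-|B|)/m}\ \ (B\in\mathcal{B}).$$
Given such a distribution the lemma is immediate: since $\mathcal{A}$ and $\mathcal{B}$ are dual, \emph{every} $X\in\mathcal{L}(\mathcal{P})$ satisfies $A\subseteq X$ for some $A\in\mathcal{A}$ or $X\subseteq B$ for some $B\in\mathcal{B}$, so the events $\{\exists A\in\mathcal{A}:A\subseteq X\}$ and $\{\exists B\in\mathcal{B}:X\subseteq B\}$ together exhaust the sample space; a union bound then gives $\sum_{A}\Pr[A\subseteq X]+\sum_{B}\Pr[X\subseteq B]\ge 1$, and plugging in the two estimates finishes the proof.

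For the distribution I would sample $T\subseteq\mathcal{P}$ by including each $p\in\mathcal{P}$ independently with probability $q=1-1/m$, put $U=\{p\in\mathcal{P}\mid \ua{p}\subseteq T\}$, and take $X=\mathcal{P}\setminus U$. Since $\ua{p'}\subseteq\ua{p}$ whenever $p\le p'$, the set $U$ is an upper set, hence $X$ is a downset and belongs to $\mathcal{L}(\mathcal{P})$. The $\mathcal{B}$-side is then exact: $X\subseteq B$ is equivalent to $\mathcal{P}\setminus B\subseteq U$, and because $\mathcal{P}\setminus B$ is an upper set this is equivalent to $\mathcal{P}\setminus B\subseteq T$; therefore $\Pr[X\subseteq B]=q^{\,n-|B|}=(1-1/m)^{n-|B|}\le e^{-(n-|B|)/m}$.

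The $\mathcal{A}$-side is where the real work lies and is the step I expect to be the main obstacle. Here $A\subseteq X$ is the event that $\ua{p}\not\subseteq T$ for all $p\in A$; these events are positively correlated, so a plain product bound goes the wrong way. Instead I would greedily extract $C\subseteq A$ with $\{\ua{p}\mid p\in C\}$ pairwise disjoint: repeatedly pick an element $p$ of the current set, add it to $C$, and delete every $q$ with $\ua{q}\cap\ua{p}\neq\emptyset$; any such $q$ lies in $\bigcup_{r\in\ua{p}}\da{r}$, a set of size at most $|\ua{p}|\cdot m\le m^{2}$, so each round deletes at most $m^{2}$ elements and $|C|\ge|A|/m^{2}$. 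For $p\in C$ the events $\{\ua{p}\subseteq T\}$ depend on disjoint coordinates of $T$, hence are independent, and using $|\ua{p}|\le m$, $0<q<1$ and $(1-1/m)^{m}\ge 1/4$ (valid for $m\ge2$) one obtains
$$\Pr[A\subseteq X]\ \le\ \prod_{p\in C}\bigl(1-q^{|\ua{p}|}\bigr)\ \le\ \bigl(1-q^{m}\bigr)^{|C|}\ \le\ (3/4)^{|C|}\ \le\ (3/4)^{|A|/m^{2}}.$$

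The crux is that a \emph{single} parameter $q$ must simultaneously be close enough to $1$ that $1-q^{m}$ stays below $3/4$ (for the $\mathcal{A}$-side) and small enough that $q^{\,n-|B|}$ decays like $e^{-(n-|B|)/m}$ (for the $\mathcal{B}$-side); the choice $q=1-1/m$ threads this needle and in fact reproduces the constants appearing in the statement, while the extra $m^{2}$ on the $\mathcal{A}$-side is exactly the price of the greedy ``disjointification'' forced by the bounds $|\ua{p}|,|\da{r}|\le m$. Minor points to verify along the way are the trivial case $A=\emptyset$ (both sides become $1$) and the inequality $m\ge 2$, which holds because $p\in\da{p}\cap\ua{p}$, so $|\da{p}|+|\ua{p}|\ge2$.
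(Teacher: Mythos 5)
Your proposal is correct and is essentially the paper's own argument: your distribution (keep each $p$ with probability $1-1/m$ and remove the upper set $U$) is exactly the paper's ``select each $p$ with probability $1/m$ and take the downward closure'' after complementing the coin flips, and both proofs use duality plus a union/expectation bound, the exact computation $(1-1/m)^{n-|B|}\le e^{-(n-|B|)/m}$ on the $\mathcal{B}$-side, and an independent subfamily of size at least $|A|/m^{2}$ (via $|\da{(\ua{a})}|\le m^{2}$) on the $\mathcal{A}$-side. Your greedy ``disjointification'' just makes explicit the step the paper states as ``it is easy to see that $k\ge |A|/m^{2}$.''
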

\begin{proof}
To prove this bound we use the 'method of expectations' similar to that in \cite{fk96}, but with a more tricky probability distribution.
Suppose we fixed some probability distribution of $X \in \mathcal{L}$.
Let us denote the expected number of $A\in \mathcal{A}, A\subseteq X$ by $E_{\mathcal{A}}$, and the expected number of
$B\in \mathcal{B}, X\subseteq B$ by $E_{\mathcal{B}}$.
Antichains $\mathcal{A}$ and $\mathcal{B}$ are dual iff for any $X \in \mathcal{L}$ either $A\subseteq X$, for some
$A \in \mathcal{A}$, or $X\subseteq B$, for some $B \in \mathcal{B}$. Thus if $\mathcal{A}$ and $\mathcal{B}$ are dual, then
$E_{\mathcal{A}} + E_{\mathcal{B}} \geq 1$. By linearity of expectations
$E_\mathcal{A} = \sum_{A\in \mathcal{A}}{E_A}$, where $E_A$ is probability that $A \subseteq X$.
Similarly, $E_\mathcal{B} = \sum_{B\in \mathcal{B}}{E_B}$, where $E_B$ is the probability that $X \subseteq B$.
Unlike to the case of Boolean lattice, no analytical expression for $E_A$ and $E_B$ is known
(even the existence of a polynomial approximation algorithm is an open question~\cite{dggj04}),
but we can find upper bounds for
$E_A,\ A \in \mathcal{A}$ and $E_B,\ B \in \mathcal{B}$.

In order to generate random (but not uniform) element $X \in \mathcal{L}$ we select each $p \in \mathcal{P}$
with probability $1/m$. Suppose elements $p_1,p_2,\ldots,p_r$ have been selected, then the resulting downset
$X \in \mathcal{L}$ is defined as $X = \da{p_1} \cup \da{p_2} \cup \ldots \cup \da{p_r}$.

For a given downset $A\in \mathcal{A}$ let us bound the probability that $A\subseteq X$.
To each $p\in\mathcal{P}$ we assign an event $I_p$ such that $p \in X$.
Note that $Pr(\overline{I_p})\geq (1-1/m)^m\geq 1/4$ (since $m \geq 2$). Consider any maximum-cardinality set $\{a_1,a_2,\ldots,a_k\} \subseteq A$ such that events $I_{a_1},I_{a_2},\ldots,I_{a_k}$ are mutually independent. For any $a\in A$ event $I_a$ happens only if some $q \geq a$ was selected, hence $I_a$ is independent of all $I_q$ for $q \notin \da{(\ua{a})}$. Since
$|\da{(\ua{a})}|\leq m^2$ it is easy to see that $k\geq |A|/m^2$.
Since event $A \subseteq X$ happens if $I_{a_1} \wedge I_{a_1} \wedge \ldots \wedge I_{a_k}$
we have $Pr(A\subseteq X) \leq \prod_{1\leq i \leq k}{(1-Pr(\overline{I_{a_i}}))} \leq (1-1/4)^{|A|/m^2}$.

To bound $E_B$, note that for any $B \in \mathcal{B}$, the probability $Pr(X \subseteq B)=
Pr(X \cap (\mathcal{P} \setminus B)=\emptyset)$. This probability is exactly
$(1-1/m)^{|\mathcal{P} \setminus B|}=(1-1/m)^{n-|B|} \leq e^{-(n-|B|)/m}$
\end{proof}

\begin{cor}\label{cor:freq}
  If $\mathcal{A}$ and $\mathcal{B}$ are dual, then at least one of the following statements is true:
  \begin{itemize}
  \item $\exists p \in \mathcal{P}: freq_{\mathcal{A}}(p) \geq \frac{1}{m \log_{4/3}{N}}$
  \item $\exists p \in \mathcal{P}: freq_{\overline{\mathcal{B}}}(p) \geq \frac{1}{m^2 \log_{4/3}{N}}$
  \end{itemize}
\end{cor}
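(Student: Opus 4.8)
The plan is to derive this corollary directly from Lemma~\ref{inequality_lemma} by a contrapositive-style argument combined with a counting/averaging estimate. Assume $\mathcal{A}$ and $\mathcal{B}$ are dual, so the inequality
$\sum_{A\in\mathcal{A}}(3/4)^{|A|/m^2} + \sum_{B\in\mathcal{B}}e^{-(n-|B|)/m} \geq 1$
holds. Since the left-hand side is a sum of $N = |\mathcal{A}|+|\mathcal{B}|$ nonnegative terms, at least one term is $\geq 1/N$; equivalently, either $\sum_{A}(3/4)^{|A|/m^2} \geq |\mathcal{A}|/N \cdot (\text{something})$ — more cleanly, I would split into the case where the $\mathcal{A}$-sum is $\geq 1/2$ (or $\geq$ half the bound) and the case where the $\mathcal{B}$-sum is $\geq 1/2$, and in each case deduce the existence of a single term that is not too small.

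First I would handle the $\mathcal{A}$ side. If $\sum_{A\in\mathcal{A}}(3/4)^{|A|/m^2} \geq 1/2$, then some $A\in\mathcal{A}$ has $(3/4)^{|A|/m^2} \geq \frac{1}{2|\mathcal{A}|} \geq \frac{1}{2N}$, which rearranges to $|A|/m^2 \leq \log_{4/3}(2N)$, i.e. $|A| \leq m^2\log_{4/3}(2N)$. The key observation is then that an element of $\mathcal{P}$ of high frequency in $\mathcal{A}$ must exist because the sets in $\mathcal{A}$ cannot all be large: by averaging, $\sum_{p\in\mathcal{P}}\mathrm{freq}_{\mathcal{A}}(p)\cdot|\mathcal{A}| = \sum_{A\in\mathcal{A}}|A|$. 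Actually the cleaner route is the dual counting: since we want an element contained in a $1/(m\log_{4/3}N)$ fraction of the $A$'s, and we know one small $A$ has at most $m^2\log_{4/3}(2N)$ elements — hmm, that alone does not give a frequent element. So instead I would argue on the whole collection: summing $(3/4)^{|A|/m^2}$ over $A$ being at least $1/2$ means the "effective" sets are small, and then I would use that $\da{p}$ for $p$ ranging over a small $A$ covers $A$ with at most $|A|$ generators, so some generator $p$ lies in many small $A$'s. The careful bookkeeping here — relating the weighted sum $\sum_A (3/4)^{|A|/m^2}$ to a frequency bound — is the step I expect to be the main obstacle, and it likely mirrors exactly the argument in \cite{fk96} adapted with the extra factor $m$ accounting for downset generators.

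For the $\mathcal{B}$ side the argument is symmetric but uses complements: if $\sum_{B\in\mathcal{B}}e^{-(n-|B|)/m}\geq 1/2$, then some $B$ has $n-|B| \leq m\log(2N) \leq m\log_{4/3}(2N)\cdot\ln(4/3)^{-1}$... here I would just need $|\mathcal{P}\setminus B| = n-|B|$ small, and then an element $p$ with $p\notin B$ for many $B$ — i.e. high $\mathrm{freq}_{\overline{\mathcal{B}}}(p)$ — exists by the same averaging over the small complements $\mathcal{P}\setminus B$, picking up the $m^2$ versus $m$ discrepancy from the fact that on the $\mathcal{A}$ side each generator $p$ "explains" up to $m$ elements of $A$ (via $\da p$) whereas on the $\mathcal{B}$ side the complements are handled without that overhead, or vice versa depending on how the constants line up. I would reconcile the constants $\frac{1}{m\log_{4/3}N}$ versus $\frac{1}{m^2\log_{4/3}N}$ at the end, absorbing the slack from the $1/2$ split and from $\log_{4/3}(2N)$ versus $\log_{4/3}N$ into these bounds (which is why the thresholds in Algorithm~\ref{distr_latt_alg} are stated with some room to spare). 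The conclusion in either case is one of the two bulleted statements, completing the proof.
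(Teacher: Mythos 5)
There is a genuine gap at exactly the step you flag as ``the main obstacle.'' You correctly extract from Lemma~\ref{inequality_lemma} that some single set must be small (some $A$ with $|A| = O(m^2\log_{4/3}N)$, or some $B$ with $n-|B| = O(m\log_{4/3}N)$), but you then try to turn a small $A$ into an element that is frequent \emph{in $\mathcal{A}$} --- and, as you yourself notice (``hmm, that alone does not give a frequent element''), this cannot work; your fallback via downset generators covering $A$ does not lead anywhere either. The missing idea is property~$(*)$, which you never invoke: since $A \not\subseteq B$ for every $A\in\mathcal{A}$, $B\in\mathcal{B}$, every $A$ meets every complement $\overline{B}=\mathcal{P}\setminus B$. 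So the pigeonhole must be applied \emph{crosswise}: if some $A$ has at most $m^2\log_{4/3}N$ elements, then, because each of the $|\mathcal{B}|$ sets $\overline{B}$ contains at least one element of $A$, some $a\in A$ lies in at least a $1/(m^2\log_{4/3}N)$ fraction of the $\overline{B}$'s --- this is the \emph{second} bullet, $freq_{\overline{\mathcal{B}}}(a)\geq 1/(m^2\log_{4/3}N)$. Symmetrically, a small $\overline{B}$ (at most $m\log_{4/3}N$ elements, note the single factor of $m$ from the normalization $(n-|B|)/m$ in the lemma) yields an element frequent in $\mathcal{A}$, giving the first bullet. The $m$ versus $m^2$ in the two thresholds is therefore not ``slack to be reconciled at the end''; it falls out directly from which family the small set comes from, in the crossed correspondence.

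A secondary, fixable issue: your split into ``$\mathcal{A}$-sum $\geq 1/2$'' versus ``$\mathcal{B}$-sum $\geq 1/2$'' costs you a factor, yielding $\log_{4/3}(2N)$ where the corollary states $\log_{4/3}N$. The paper avoids this by setting $k=\min\bigl(\min_A |A|/m^2,\ \min_B (n-|B|)/m\bigr)$, noting every one of the $N$ terms is at most $(3/4)^k$ (after bounding $e^{-x}\leq(3/4)^x$), so $N(3/4)^k\geq 1$ and $k\leq\log_{4/3}N$ exactly.
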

\begin{proof}
Let $k_A = \min_{A \in \mathcal{A}}{|A|/m^2}$, $k_B = \min_{B \in \mathcal{B}}{(n-|B|)/m}$, and $k = \min(k_A,k_B)$.
By Lemma \ref{inequality_lemma}
$\sum_{A\in \mathcal{A}}{{(3/4)}^{|A|/m^2}} + \sum_{B\in \mathcal{B}}{(3/4)^{(n-|B|)/m}} \geq 1$.
Hence $(3/4)^k N \geq 1$ which yields $k \leq \log_{4/3} N$. Since $(\mathcal{A}, \mathcal{B})$ has property $(*)$, for
any $A \in \mathcal{A},\ B \in \mathcal{B}$ the intersection $A \cap \overline{B}$ is nonempty.
If $|A|=k m^2$, then there is some $a\in A$ such that
$freq_{\overline{\mathcal{B}}}(a) \geq 1/(k m^2) \geq 1/(m^2\log_{4/3}{N})$. Similarly, if
$|\overline{B}|=k m$, then there is some $b\notin B$ such that $freq_{\mathcal{A}}(b) \geq 1/(km) \geq 1/(m\log_{4/3}{N})$.
\end{proof}

\begin{thm}[Time complexity of the dualization algorithm]
  \emph{Algorithm \ref{distr_latt_alg}} decides duality in time $2^{O(n^{0.67}\log^3(|\mathcal{A}|+|\mathcal{B}|))}$.
\end{thm}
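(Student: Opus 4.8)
The plan is to treat correctness as already settled — the correctness Lemma gives ``$\mathcal{A},\mathcal{B}$ dual $\iff$ both recursive subinstances are dual'' for \emph{any} choice of $p$, Corollary~\ref{cor:freq} makes the \texttt{FALSE} return in the no‑frequent‑element case sound, and \emph{EasyTest} is trivial — so that everything reduces to counting recursive calls. Since the per‑call work (computing $n,m,N$, all frequencies $freq_{\mathcal{A}}(p),freq_{\overline{\mathcal{B}}}(p)$, the four sets $\mathcal{A}^p_i,\mathcal{B}^p_i$, and the two restricted posets) is $\mathrm{poly}(n,N)$, and the recursion tree is binary, it suffices to bound the number of its leaves by $2^{O(n^{0.67}\log^3 N)}$; the $\mathrm{poly}$ factor is absorbed since $n^{0.67}\log^3 N\ge\log N$. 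Fix the initial sizes $n_0,N_0$; on any root‑to‑leaf path all later $n\le n_0$ and $N\le N_0$.

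First I would classify each tree edge (parent $\to$ child) as \emph{productive}, a \emph{jump}, or a \emph{shrink}. At a \emph{large‑element} step ($m>n^{1/3}$) the chosen $p$ has $|\da p|+|\ua p|>n^{1/3}$, so the child removing the larger of $\da p,\ua p$ deletes more than $n^{1/3}/2$ ground elements — a jump — while the other child (still deleting $\ge 1$ element) is a shrink. At a \emph{frequency} step ($m\le n^{1/3}$, not returning \texttt{FALSE}), Corollary~\ref{cor:freq} and the rule $p=\argmax_p\max(freq_{\mathcal{A}}(p),freq_{\overline{\mathcal{B}}}(p))$ give $\max(freq_{\mathcal{A}}(p),freq_{\overline{\mathcal{B}}}(p))\ge\frac{1}{m^{2}\log_{4/3}N}\ge\frac{1}{n_0^{2/3}\log_{4/3}N_0}=:\delta$; if $freq_{\mathcal{A}}(p)$ attains this I would use $|\mathcal{A}^p_2|=(1-freq_{\mathcal{A}}(p))|\mathcal{A}|\le(1-\delta)|\mathcal{A}|$ with $|\mathcal{B}^p_2|\le|\mathcal{B}|$, and if $freq_{\overline{\mathcal{B}}}(p)$ attains it, $|\mathcal{B}^p_1|=|\{B\in\mathcal{B}:p\in B\}|=(1-freq_{\overline{\mathcal{B}}}(p))|\mathcal{B}|\le(1-\delta)|\mathcal{B}|$ with $|\mathcal{A}^p_1|\le|\mathcal{A}|$ — so exactly one child shrinks $v:=|\mathcal{A}|\cdot|\mathcal{B}|$ by a factor $\le 1-\delta$ (that is the productive edge) while the other child still deletes $\ge 1$ ground element (a shrink edge). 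Thus every internal node emits exactly one shrink edge together with one productive‑or‑jump edge.

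Next I would bound, along a single root‑to‑leaf path: the number of productive edges is at most $\frac{\ln v_0}{-\ln(1-\delta)}\le\frac{2\ln N_0}{\delta}=O(n_0^{2/3}\log^2 N_0)$ (as $v\ge 1$ always and $v_0\le N_0^2$); the number of jumps is $O(n_0^{2/3})$, since the potential $\Phi:=n^{2/3}$ never increases on any edge and drops by at least $\tfrac13$ on a jump (by concavity, $(n-n^{1/3}/2)^{2/3}\le n^{2/3}-\tfrac13$); and the number of shrink edges is at most $n_0$. Then I would encode a leaf by the sequence of the numbers of consecutive shrink edges that occur before the first, between successive, and after the last productive‑or‑jump edge of its path: this has at most $D+1$ entries, where $D=O(n_0^{2/3}\log^2 N_0)$ bounds the number of productive‑or‑jump edges on a path, each entry lies in $\{0,\dots,n_0\}$, and there is no further choice anywhere (a shrink step has a determined shrink child, and at a productive‑or‑jump branch point both which child is taken and whether the step is a frequency or a large‑element step are forced). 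Hence the number of leaves is at most $(n_0+1)^{D+2}=2^{O(n_0^{2/3}\log^2 N_0\,\log n_0)}$, and since $n_0^{2/3}\log^2 N_0\log n_0=O(n_0^{0.67}\log^3 N_0)$ — because $\log n_0\le\max(\log N_0,\,n_0^{0.003})$ for $n_0$ large and $n_0^{2/3}\cdot n_0^{0.003}\le n_0^{0.67}$ — this is $2^{O(n^{0.67}\log^3(|\mathcal{A}|+|\mathcal{B}|))}$, which after multiplying by the per‑node $\mathrm{poly}(n,N)$ work is the claimed running time.

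I expect the main obstacle to be the guarantee used in the edge‑classification: that at \emph{every} frequency node one of the two children is productive. This is the distributive‑lattice analogue of the Fredman--Khachiyan high‑frequency argument, and it must correctly match the active half of Corollary~\ref{cor:freq} ($freq_{\mathcal{A}}$ versus $freq_{\overline{\mathcal{B}}}$) to the right half of the decomposition ($(\mathcal{A}^p_2,\mathcal{B}^p_2)$ versus $(\mathcal{A}^p_1,\mathcal{B}^p_1)$). The residual difficulty is that the \emph{other} child is only guaranteed to shrink the ground set, so up to $n$ consecutive ``useless'' steps may separate productive ones; this is exactly what turns the leaf count into a $\mathrm{poly}(n)$‑branching estimate and so introduces the extra $\log n$, and it is the slack $0.67>2/3$ (any exponent above $2/3$ would do, and it is also exactly the slack needed to close a standalone recursion $L(n)\le L(n-1)+L(n-n^{1/3}/2)$ for the large‑element steps) that lets this factor be absorbed into $\log^3 N$.
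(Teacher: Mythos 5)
Your argument is correct and arrives at the paper's bound, but the counting step is done by a genuinely different mechanism than in the paper. The paper folds everything into a single volume $vol(\mathcal{A},\mathcal{B},\mathcal{P})=|\mathcal{A}|\cdot|\mathcal{B}|\cdot n$: in the frequency case one child loses a $\tfrac{1}{m^2\log N}$-fraction of the volume (via Corollary~\ref{cor:freq}, with exactly the matching you identified of $freq_{\mathcal{A}}$ to the $\mathcal{P}\setminus\ua{p}$ child and $freq_{\overline{\mathcal{B}}}$ to the $\mathcal{P}\setminus\da{p}$ child), in the large-element case one child loses a $\tfrac{1}{2n^{2/3}}$-fraction because $n$ itself is a factor of the volume, and the other child always decreases the volume by at least one; this yields the recurrence $A(v)\le 1+A((1-\varepsilon)v)+A(v-1)$ with $\varepsilon=\tfrac{1}{2n^{2/3}\log N}$, whose closed-form bound $A(v)\le(3+2v\varepsilon)^{\log v/\varepsilon}$ is quoted from \cite{fk96} and then, exactly as you do, the $\log n$ factors are absorbed using the slack $0.67>2/3$. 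You instead keep three separate monotone potentials ($|\mathcal{A}|\cdot|\mathcal{B}|$ for productive edges, $n^{2/3}$ for jumps, $n$ for shrink edges) and bound the number of leaves directly by an injective run-length encoding of root-to-leaf paths. The paper's route buys brevity: because $n$ sits inside the volume, the ``useless'' branch automatically decreases the volume by one and no encoding argument is needed, only the cited recurrence bound; your route buys self-containedness (no appeal to the Fredman--Khachiyan recurrence lemma) and makes visible where each factor comes from ($\log^2 N$ from productive steps, $n^{2/3}$ from the jump potential, the extra $\log n$ from the run-length alphabet). Two cosmetic points if you write it up: fix a deterministic rule designating the productive/jump child (you are only guaranteed \emph{at least} one productive child, not exactly one), and apply the bound $v\ge 1$ at the last internal node of a path rather than at the leaf, where $v$ may vanish; neither affects the stated $O(\cdot)$ bound.
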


\begin{proof}
First note that all lines of \emph{Algorithm~\ref{distr_latt_alg}} can be computed in polynomial time (disregarding recursive calls).
In order to bound the number of recursive calls during an execution of \emph{Algorithm~\ref{distr_latt_alg}},
we consider the following \emph{problem volume} quantity:
$vol(\mathcal{A},\mathcal{B},\mathcal{P})=|\mathcal{A}|\cdot|\mathcal{B}|\cdot n$.
Dualization problem $(\mathcal{A},\mathcal{B},\mathcal{P})$ branches into two
subproblems $(\mathcal{A}^p_1,\mathcal{B}^p_1,\mathcal{P} \setminus \da{p})$ and
$(\mathcal{A}^p_2,\mathcal{B}^p_2,\mathcal{P} \setminus \ua{p})$. Let us denote the volumes of these problems
by $vol$, $vol_1$, and $vol_2$, respectively. In case of {\verb line  13} by Corollary~\ref{cor:freq}
either $vol_2 \leq (1-\frac{1}{m \log N})vol$ or
$vol_1 \leq (1-\frac{1}{m^2 \log N})vol$.
Moreover, in case of {\verb line  8} of the \emph{Algorithm \ref{distr_latt_alg}}, $m = |\da{p}|+|\ua{p}| > n^{1/3}$,
which implies either $vol_1\leq (n - \frac{m}{2})/n \cdot vol \leq (1-\frac{1}{2n^{2/3}})vol$,
or $vol_2 \leq (1-\frac{1}{2n^{2/3}})vol$. Thus, we have the following bound on the number of recursive calls:
$A(vol)\leq A((1-\frac{1}{2n^{2/3}logN})vol) + A(vol-1) + 1$. In \cite{fk96}
it has been proven that solution $A(v)$ of the recurrence $A(v)\leq 1 + A((1-\varepsilon)v) + A(v-1),\ A(1) = 1$ can be bounded by
$A(v)\leq (3+2v\varepsilon)^{\log v / \varepsilon}$.
Substituting $\varepsilon = \frac{1}{2n^{2/3}logN}$ yields $A(v) \leq (3+2N^2n^{1/3})^{O((\log N + \log n)n^{2/3}\log N)} \leq
2^{O((\log N + \log n)^2 n^{2/3}\log N)} \leq 2^{O(n^{0.67}\log^3N)}$.
\end{proof}

\section{Conclusion}

In this paper we have studied the dualization problem on a lattice given by the ordered sets of its irreducible elements (i.e., as a concept lattice). For this representation, the dualization problem has complexity different from that in case of explicit lattice representation as an ordered set of all its elements. We have shown that the dualization problem for a lattice given by the ordered set of its irreducible elements (concept lattice) is equivalent to the enumeration of minimal hypotheses, which is not possible in output polynomial time unless P=NP. For the case of distributive lattices dualization was shown to be possible in subexponential time.
We have proved that the long standing open complexity problem of constructing minimum implication base (irredundant Horn CNF) is at least as hard as dualization over distributive lattice or dualization over the product of explicitly given lattices (open problem stated by Elbassioni~\cite{el09}).

It is still open whether dualization over distributive lattice
can be solved in output quasi-polynomial time, or this problem cannot be solved in output polynomial time unless P = NP.
The complexity of dualization for other important classes of lattices, such as modular, also remains an open question for the case where the lattice is given by the ordered set of its irreducible elements.

\section*{Acknowledgments}

We thank Kazuhisa Makino and Lhouari Nourine for helpful discussions. The second author was supported by the Basic Research Program of the National Research University Higher School of Economics (Moscow, Russia) and Russian Foundation for Basic Research.

\end{document}